\newcommand\bbR{\ensuremath{\mathbb{R}}} 
\DeclareMathOperator*{\tr}{tr} 
\newcommand{\1}{\mathbb{I}} 
\newtheorem{theorem}{Theorem}
\newtheorem{lemma}[theorem]{Lemma}
\def\BibTeX{{\rm B\kern-.05em{\sc i\kern-.025em b}\kern-.08emT\kern-.1667em\lower.7ex\hbox{E}\kern-.125emX}}
\renewcommand\footnotetextcopyrightpermission[1]{}
\begin{document}

\fancyhead{}

\title{Spectral Lower Bounds on the I/O Complexity of\\Computation Graphs}

\author{Saachi Jain}
\email{saachi@cs.stanford.edu}
\affiliation{%
  \institution{Stanford University}
  \city{Stanford}
  \state{CA}
  \country{USA}
}
\author{Matei Zaharia}
\email{matei@cs.stanford.edu}
\affiliation{%
  \institution{Stanford University}
  \city{Stanford}
  \state{CA}
  \country{USA}
}

%
\renewcommand{\shortauthors}{Jain and Zaharia, et al.}

%
\begin{abstract}
  We consider the problem of finding lower bounds on the I/O complexity of arbitrary computations in a two level memory hierarchy. Executions of complex computations can be formalized as an evaluation order over the underlying computation graph. However, prior methods for finding I/O lower bounds leverage the graph structures for specific problems (e.g matrix multiplication) which cannot be applied to arbitrary graphs. In this paper, we first present a novel method to bound the I/O of any computation graph using the first few eigenvalues of the graph's Laplacian. We further extend this bound to the parallel setting. This spectral bound is not only efficiently computable by power iteration, but can also be computed in closed form for graphs with known spectra. We apply our spectral method to compute closed-form analytical bounds on two computation graphs (the Bellman-Held-Karp algorithm for the traveling salesman problem and the Fast Fourier Transform), as well as provide a probabilistic bound for random Erd\H{o}s R\'enyi graphs. We empirically validate our bound on four computation graphs, and find that our method provides tighter bounds than current empirical methods and behaves similarly to previously published I/O bounds.

\end{abstract}

%
%
\begin{CCSXML}
<ccs2012>
<concept>
<concept_id>10002950.10003624.10003633.10003645</concept_id>
<concept_desc>Mathematics of computing~Spectra of graphs</concept_desc>
<concept_significance>500</concept_significance>
</concept>
<concept>
<concept_id>10003752.10003809.10010055.10010058</concept_id>
<concept_desc>Theory of computation~Lower bounds and information complexity</concept_desc>
<concept_significance>300</concept_significance>
</concept>
<concept>
<concept_id>10011007.10010940.10010941.10010949.10010965.10010967</concept_id>
<concept_desc>Software and its engineering~Input / output</concept_desc>
<concept_significance>300</concept_significance>
</concept>
</ccs2012>
\end{CCSXML}

\ccsdesc[500]{Mathematics of computing~Spectra of graphs}
\ccsdesc[300]{Theory of computation~Lower bounds and information complexity}
\ccsdesc[300]{Software and its engineering~Input / output}

%
\keywords{computational graphs, spectral graph theory, I/O lower bounds}

%
\maketitle

\section{Introduction}
\let\thefootnote\relax\footnotetext{This is the full version of the paper appearing in the ACM Symposium on Parallelism in Algorithms and Architectures (SPAA), 2020}
Many important applications are bottlenecked not by processing speeds, but by I/O cost: the speed to transfer data items between fast memory (e.g., registers or the CPU cache) and slow memory (e.g., RAM or disk).
There has thus been considerable interest in designing I/O efficient algorithms and in understanding I/O lower bounds~\cite{elangodissertation, ballard2012graph, irony2004communication, redbluepebble}.

Past work on I/O lower bounds has largely focused on finding bounds for specific algorithms, such as matrix multiplication or the Fast Fourier Transform~\cite{elangodissertation, ballard2012graph, irony2004communication, redbluepebble}. However, these approaches leverage properties specific to the tasks at hand, and do not translate across tasks. In this paper, we explore methods that can be applied to \textit{arbitrary} computations and can be computed efficiently in an automatic fashion. Such generic bounds can be used to characterize the I/O cost of computations that are too complex to analyze by hand. Our method also provides a new approach for finding closed form theoretical bounds on computation graphs as long as the Laplacian eigenvalues (or bounds on these values) are known.

We approach the problem of minimizing I/O for an arbitrary computation as finding an optimal evaluation order on the underlying directed computation graph. In a computation graph, each vertex represents a single operation: the parents of the vertex indicate the operands of the operation. We assume a two-level memory architecture with a fixed amount of fast memory and infinite slow memory: I/O is incurred when transferring data between fast and slow memory (Section \ref{memorymodel}).

We present a novel method to provide lower bounds on the I/O for any computation graph using the eigenvalues of the graph Laplacian (Section \ref{spectral}). We further extend this bound to the parallel setting. This spectral bound is efficiently computable and can be applied to arbitrarily large and complex graphs. For graphs with known spectra, this bound can also be computed in closed form. We compute closed form bounds for two computation graphs: the Bellman-Held-Karp algorithm for the traveling salesman problem (TSP) as well as the Fast Fourier Transform (FFT). In the process, we also present a novel result on the multiplicity of the eigenvalues of the butterfly graph, which we use to complete the bound for the FFT. We find that spectral bound for the FFT graph is at most a factor of $(1/\log M)$ weaker than the previously published asymptotically tight bound (where $M$ is the size of fast memory), which was computed via direct inspection of the butterfly graph using $S$ partitions \cite{redbluepebble}. We additionally present a probabilistic bound for random Erd\H{o}s R\'enyi graphs.

We evaluate our method empirically by computing lower bounds for four types of computation graphs: the Fast Fourier Transform, matrix multiplication (naive and Strassen), and the Bellman-Held-Karp algorithm (Section \ref{evaluation}). We find that our bounds are tighter than current automatic methods ~\cite{elangostcut} and behave similarly to published analytical bounds.
\section{Related Work}
\label{relatedwork}
Hong and Kung first framed the problem of I/O complexity as the ``red-blue pebble game" and used it to prove several bounds~\cite{redbluepebble}. The game represents slow memory as an infinite pool of blue pebbles and fast memory as a finite set of red pebbles. An evaluation then corresponds to pebbling each vertex of the graph according to the game; I/O is incurred when placing a red pebble on top of a blue pebble (reading from slow memory) or vice-versa (writing to slow memory).

 Lower bounds on na\"ive matrix multiplication often use the Loomis-Whitney theorem, which embeds operations in the voxels of a computation cube~\cite{irony2004communication, ballard2013communication}. However, volume based arguments such as Loomis-Whitney do not apply for more general computations. I/O bounding techniques for algorithms beyond matrix multiplication generally focus on the computation graph itself. 
 
 Most current work on lower bounds via computation graphs requires manual inspection of the graph. In~\cite{redbluepebble}, the authors find a $2S$ partition of the computation graph to bound I/O---a proof technique that is non-trivial for complex graphs. In~\cite{pathrouting} and \cite{dichotomy}, the authors use path routing and dichotomy width respectively to find lower bounds. In~\cite{ballard2012graph}, the authors reduce the I/O problem to a graph partitioning problem in order to find a lower bound for Strassen's matrix multiplication algorithm using the edge expansion of the graph, which was computed by hand by recursively decomposing the Strassen computation graph. None of these methods can easily be computed automatically for arbitrary graphs. Instead, lower bounds on each graph must be separately proved by inspecting the specific graph, and are thus difficult to generalize. We instead focus on methods that can \textit{automatically} compute lower bounds for any input graph, regardless of its structure.

 To our knowledge, there are only two works that discuss automated methods for lower bounds for arbitrary graphs. In the first work, the authors find automatic bounds by computing convex min s-t cuts on the sub-graphs~\cite{elangostcut}. With a runtime $O(n^5)$ for a graph with $n$ nodes, this method is significantly slower than our spectral method, which can be computed in $O(n^3)$. We compare against this method in Section~\ref{evaluation} and find that it yields looser bounds than our proposed spectral method. The second work uses an Integer Linear Program (ILP) to solve for the $2S$ partition of the computation graph~\cite{elangodissertation}. This method is computationally expensive because it necessitates an exact ILP solver and is thus combinatorial in difficulty. Since this ILP based method is intractable, we do not compare its performance against the spectral bound as the method cannot be performed for large graphs, instead limiting ourselves to methods that can be computed in polynomial time. 
 


\section{Computation Graphs and Memory Model}
\label{memorymodel}
A computation can be represented by an underlying directed computation graph $G$. Each operation, including the inputs and outputs, is represented by a vertex. An edge from $u$ to $v$ indicates that the operation $v$ was computed with $u$ as an operand. The graph is acyclic, with the inputs as sources and the outputs as sinks. For example, the inner product of two vectors with two elements each can be represented as a 7 vertex graph: 4 vertices for inputs, 2 vertices for the intermediate products, and a single vertex for the sum. (Figure \ref{dotgraph}). 

\begin{figure}[t]
    \centering
    \includegraphics[width=0.33\textwidth]{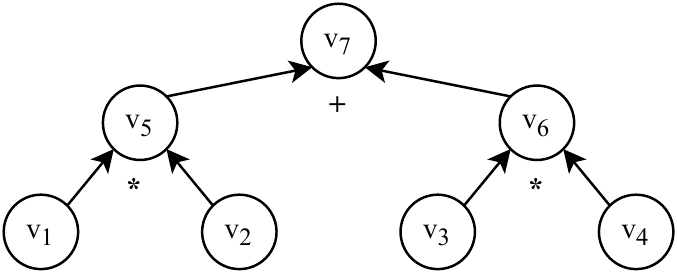}
    \caption{Computation graph of an inner product.}
    \vspace{-3mm}
    \label{dotgraph}
\end{figure}

We assume a two level memory hierarchy on a single processor with infinite slow memory and a limited cache of fast memory of size $M$ elements, where the result of each operation in the computation graph is a single element. Every operation in the computation graph must be evaluated. When a vertex $v$ is evaluated, the parents of $v$ must be loaded into fast memory from slow memory if they are not already present. As in \cite{ballard2012graph, pathrouting, elangodissertation}, we disallow recomputation of the same vertex: therefore, if a computed result is needed elsewhere in the computation graph and is about to be evicted, the result must first be written to slow memory. 

I/O can be separated into trivial (reading inputs and writing outputs) and non-trivial I/O. We focus on \textit{non-trivial} I/O: we thus do not directly include the cost of reading inputs or writing outputs. 
Instead, we assume that inputs can be read from the user directly into fast memory, and outputs are reported to the user immediately as they are computed. However: if an input is evicted from fast memory and is still needed elsewhere in the computation, it must be written to slow memory. This assumption is inherent in the proof in \cite{ballard2012graph, elangostcut}. Because we seek lower bounds, we do not constrain the eviction policy of fast memory. I/O is incurred when, during computation, an element is written to slow memory from fast memory or read from slow memory into fast memory. 

An evaluation order is then the order that operations are evaluated in the graph. Since a vertex can only be evaluated after its parents, a valid evaluation order must be topological with respect to the graph. We thus seek lower bounds on the I/O incurred by the optimal evaluation order.

\subsection{Optimization Task}
Formally, let $G = (V,E)$ be a computation graph with vertices $V$ and edges $E$. Let $n=|V|$ be the number of operations in the graph, and let $M$ be the size of fast memory. Note that each vertex in the graph is evaluated exactly once; therefore, the total computation takes exactly $n$ time-steps.

 We formalize an evaluation order on $G$ as a permutation matrix $X \in \mathbb{R}^{n \times n}$, where $X_{ij}$ is one if $v_j$ is computed at time-step $i$. Let $\mathcal{O}_G$ be the set of valid topological orders on $G$. Since vertices must be evaluated after their operands, $X \in \mathcal{O}_G$.

An I/O is incurred every time an element must be read into fast memory from slow memory or written to slow memory from fast memory. Let $J_G(X)$ be the number of nontrivial I/Os that were incurred by evaluating $G$ in the order specified by $X$ on $G$. We seek a lower bound on $J_G^*$, the optimal I/O incurred by any evaluation order:
$$J_G^* = \inf_{X \in \mathcal{O}_G} J_G(X).$$
\section{Spectral Bounds via the Graph Laplacian}
\newcommand{\floor}[1]{\ensuremath{\left \lfloor #1 \right \rfloor}}

\label{spectral}
 In this section, we find a lower bound based on the eigenvalues of the graph Laplacian. We first link the problem to the edge expansion of the graph, by counting the number of edges that cross boundaries over a graph partition as in \cite{ballard2012graph}. We frame this problem as a quadratic program (QP) with respect to the graph Laplacian. Finally we use the Laplacian's spectra to find a lower bound on the solution to the QP.

\textbf{Notation:} For $v \in V$, let $d_{in}(v), d_{out}(v),$ and $d(v)$ be the in-degree, out-degree, and total degree of $v$ respectively. Finally, for any subset $S \subseteq V$, we define $\partial S$ as the edge boundary of $S$:
$\partial S = \{ (u,v) \in E \mid (u \in S \wedge v \notin S) \vee (v \in S \wedge u \notin S) \}.$

\subsection{Counting Edges over Graph Partitions}

For any evaluation $X$ on $G$, we can choose a partition $P \subseteq 2^V$ that divides $V$ into disjoint subsets of vertices so each $S \in P$ is contiguously ordered by $X$. $P$ thus defines breakpoints on $X$. Figure \ref{segmentation} depicts an example of a partition on a graph. The numbers on the vertices indicate the evaluation order determined by $X$. The graph is then partitioned into green, yellow, and blue segments. Each segment is contiguous with respect to the order.

Let $\mathcal{P}_X$ be the set of valid partitions on $X$ according to the ordering constraint. We leverage the following key lemma from \cite{ballard2012graph}, which divides the I/O cost of a subset of a computation graph into reads (edges entering the subgraph), and writes (edges leaving the subgraph). For each subset $S \in P$, define the following sets:
$$R_S = \{v \in V \mid v \notin S, \exists (v, u)\in E \text{ s.t } u \in S\},$$
$$W_S = \{v \in V \mid v \in S, \exists (v,u) \in E \text{ s.t } u \notin S\}.$$

$R_S$ is the vertices not in $S$ with an edge into $S$, and $W_S$ is the vertices in $S$ with an edge outside of $S$. Ballard et. al in \cite{ballard2012graph} then present the following lemma:

\begin{lemma}[Equation 6 from \cite{ballard2012graph}]
$$J_G(X) \geq \max_{P \in \mathcal{P}_X} \left(\sum_{S \in P} |R_S| + |W_S|\right) - 2M|P|.$$
\end{lemma}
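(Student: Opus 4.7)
The plan is to prove the lemma by a direct accounting argument: fix $X$ and any $P \in \mathcal{P}_X$, and show that within the time-window of each block $S \in P$ at least $|R_S| + |W_S| - 2M$ I/Os must be incurred. Because the blocks are contiguous, their time-windows are pairwise disjoint, so the per-block counts simply add, after which taking the maximum over $P$ yields the stated inequality.

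First I would fix a valid partition $P$ and, for each $S \in P$, let $I_S = [t_S, t_S']$ be the contiguous time interval during which the vertices of $S$ are executed under $X$. Because $X$ respects the topology of $G$ and $S$ is contiguous, any parent of a vertex in $S$ that does not itself lie in $S$ is evaluated strictly before $t_S$; symmetrically, any child outside $S$ of a vertex in $S$ is evaluated strictly after $t_S'$. This temporal separation is what lets us localize the reads associated with $R_S$ and the writes associated with $W_S$ to the window $I_S$.

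The bookkeeping then proceeds in two halves. For the reads: every $v \in R_S$ is a parent of some vertex in $S$, so the value of $v$ must reside in fast memory at the time its child in $S$ is evaluated. Since $v$ was computed before $t_S$, its value is either already present in fast memory at time $t_S$ or it must be read from slow memory at some instant during $I_S$. The capacity of fast memory bounds the first possibility by $M$, so at least $\max(|R_S| - M, 0)$ distinct reads are incurred during $I_S$. A mirror argument handles the writes: each $v \in W_S$ is computed in $I_S$ but used only after $t_S'$, so either $v$ still occupies fast memory at $t_S'$ (at most $M$ such values) or, at some earlier instant during $I_S$ when it was evicted, it was first written to slow memory. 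Hence at least $\max(|W_S| - M, 0)$ distinct writes are incurred during $I_S$.

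Summing these counts across $S \in P$ and using disjointness of the $I_S$ to avoid any double counting yields $J_G(X) \geq \sum_{S \in P}\bigl(\max(|R_S|{-}M,0) + \max(|W_S|{-}M,0)\bigr) \geq \sum_{S \in P}(|R_S| + |W_S|) - 2M|P|$, and taking the supremum over $P \in \mathcal{P}_X$ gives the claimed bound. The main subtlety I would watch is the disjointness of the time windows --- essential so that a read or write chargeable to one $S$ cannot be double-counted against another --- but this is immediate from the definition of $\mathcal{P}_X$ as a partition into contiguous blocks. A minor care point is that the $M$-capacity bound on fast memory is an \emph{instantaneous} bound, which is precisely what is needed at the two endpoints $t_S$ and $t_S'$ and what ultimately produces the $2M|P|$ correction term.
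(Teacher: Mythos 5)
Your proof is correct and follows essentially the same approach as the paper's (which itself only summarizes the argument from Ballard et al.): charge each block $S$ at least $|R_S|+|W_S|-2M$ I/Os, with the $2M$ slack coming from the at most $M$ values of $R_S$ already resident in fast memory at the start of the block and the at most $M$ values of $W_S$ still resident at its end. Your additional care about the disjointness of the time windows and the localization of reads and writes to each window is a correct and worthwhile elaboration of the same accounting, not a different route.
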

\begin{proof}
We summarize the proof of their lemma here. To evaluate the nodes in $S$, the vertices in $R_S$ must be read into fast memory (or were already in fast memory before beginning computation of $S$). Similarly, the vertices in $W_S$ are freshly computed and needed elsewhere in the evaluation and thus must be written out or left in fast memory at the end of $S$. (Figure \ref{partition}). Since the fast memory size is only $M$, at least $|R_S| + |W_S| - 2M$ I/O's are incurred by evaluating the nodes in $S$.
\begin{figure}[t]
    \includegraphics[width=0.3\textwidth]{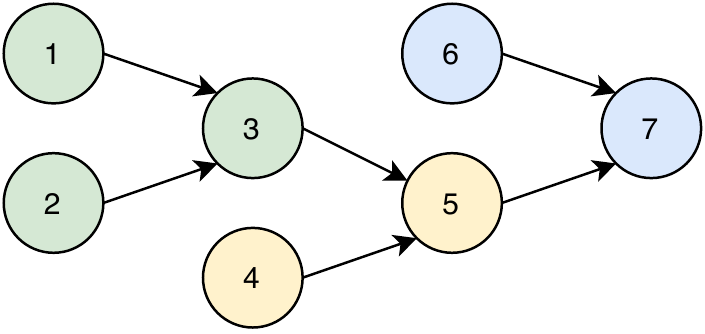}
        \caption{A computation graph: the numbers indicate the evaluation order and the colors are a valid partition.}
        \vspace{-2mm}
        \label{segmentation}
\end{figure}

\begin{figure}[t]
        \includegraphics[width=0.45\textwidth]{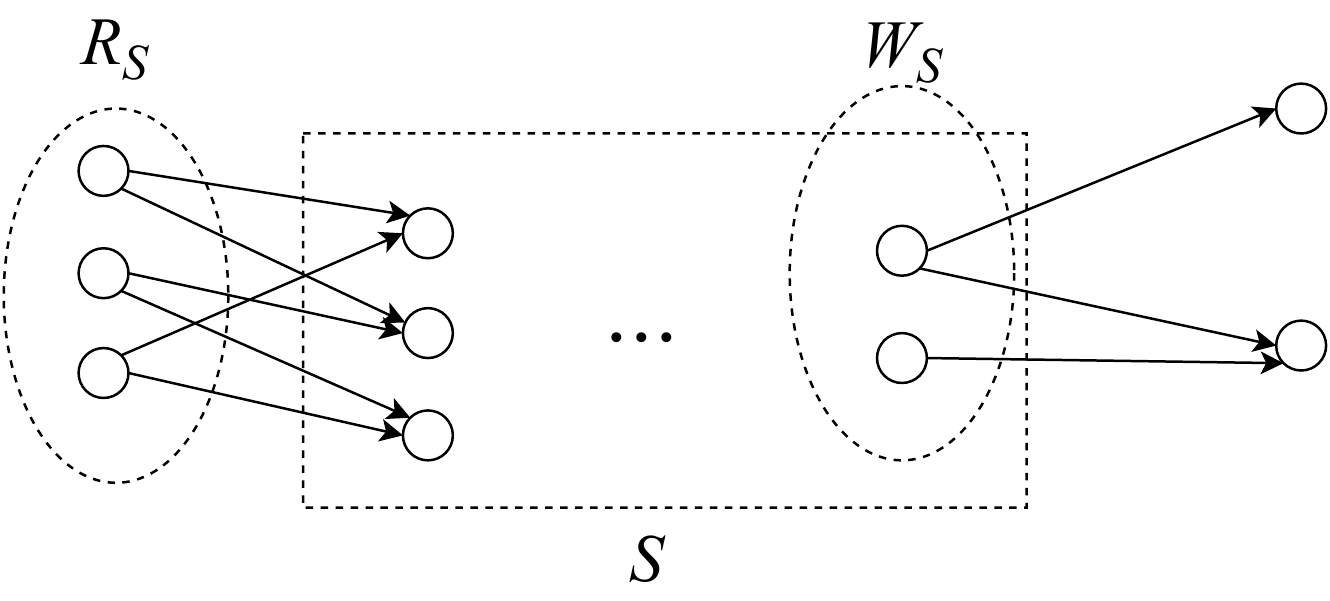}
        \caption{We identify sets $R_S$ and $W_S$ that cause I/O for each component $S$ in partition $P$. }
        \vspace{-3mm}
        \label{partition}    
\end{figure}

Summing over all $S \in P$ leads to a bound on the IO incurred by $G$. Any $P$ is valid so long as $P$ splits $V$ into components contiguous in $X$. Specifically, if $\mathcal{P}_X$ is the set of valid partitions with respect to $X$:
$$J_G(X) \geq \max_{P \in \mathcal{P}_X} \left(\sum_{S \in P} |R_S| + |W_S|\right) - 2M|P|.$$
\end{proof}

It is easier to compute the number of edges crossing into and out of $S$ rather than the vertex sets $R_S$ and $W_S$. Ballard et. al use this lemma to bound the I/O of the computation graph for Strassen matrix multiplication. However, they make several assumptions that weaken the bound for general graphs. Firstly, rather than computing a bound for all segments in the partition, they derive a bound for any single $n/|P|$ sized sub-graph within the Strassen computation graph. They then compute this bound specifically for Strassen-like graphs, and multiply this bound by $|P|$ to achieve a bound for the entire graph. This approach succeeds for the Strassen graph where the I/O is evenly distributed across the graph. However, this relaxation can be loose for graphs where the I/O is concentrated in a small portion of the vertices. Secondly, they deal strictly with regular graphs by adding loops to the computation graph. As a result, they link the size of $|R_S| + |W_S|$ to the size of the edge boundary by dividing by the maximum undirected degree, i.e
$|R_S| + |W_S| \geq \frac{1}{d_{max}(u)} |\partial S|.$ While this assumption is convenient for closed form bounds, it is not necessary for automatic methods where we can retain access to the graph. 

The following theorem links the partition to the I/O cost. We diverge from \cite{ballard2012graph} by bounding over all segments and maintaining access to the individual degrees of the vertices.

\begin{theorem} \label{objective}
    For fast memory size $M$ and graph $G$, the optimal I/O is lower bounded by:
    \begin{equation}\label{eq_objective}
        J^*_G \geq \min_{X \in \mathcal{O{_G}}}  \max_{P \in \mathcal{P}_X} \left(\sum_{S \in P}\sum_{(u,v) \in \partial S} \frac{1}{d_{out}(u)}\right) - 2M |P|.
    \end{equation}
\end{theorem}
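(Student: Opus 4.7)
The plan is to reduce the statement directly to the lemma of Ballard et al.\ already quoted in the excerpt, by replacing the vertex counts $|R_S|+|W_S|$ with the edge-weighted sum $\sum_{(u,v)\in\partial S} 1/d_{\mathrm{out}}(u)$. Concretely, I would fix an arbitrary evaluation order $X\in\mathcal{O}_G$ and partition $P\in\mathcal{P}_X$, pick any $S\in P$, and argue on a per-vertex basis that each element of $R_S$ and $W_S$ absorbs at least the appropriate fractional contribution from the edges of $\partial S$ incident to it.

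First I would handle the outgoing side. For any $u\in W_S$, by definition $u\in S$ and at least one (and at most $d_{\mathrm{out}}(u)$) of its outgoing edges crosses out of $S$. Hence
\[
    |W_S| \;=\; \sum_{u\in W_S} 1 \;\ge\; \sum_{u\in W_S}\frac{|\{v\notin S:(u,v)\in E\}|}{d_{\mathrm{out}}(u)} \;=\; \sum_{\substack{(u,v)\in\partial S\\ u\in S}}\frac{1}{d_{\mathrm{out}}(u)}.
\]
The analogous argument for $R_S$: every $u\in R_S$ lies outside $S$ and has between $1$ and $d_{\mathrm{out}}(u)$ outgoing edges landing in $S$, so
\[
    |R_S| \;\ge\; \sum_{\substack{(u,v)\in\partial S\\ v\in S}}\frac{1}{d_{\mathrm{out}}(u)}.
\]
Since every edge in $\partial S$ falls into exactly one of the two cases $u\in S$ or $v\in S$, adding the two inequalities gives
\[
    |R_S|+|W_S| \;\ge\; \sum_{(u,v)\in\partial S}\frac{1}{d_{\mathrm{out}}(u)}.
\]

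Summing over $S\in P$, plugging into the Ballard et al.\ lemma, and then taking the maximum over $P\in\mathcal{P}_X$ and the infimum over $X\in\mathcal{O}_G$ yields the claimed bound on $J_G^{*}$. The only mild subtlety is book-keeping: making sure that each edge of $\partial S$ is charged exactly once (covered by the dichotomy above, since $G$ is directed and a crossing edge has precisely one endpoint in $S$), and that the inequality is preserved when $d_{\mathrm{out}}(u)=0$ is vacuously avoided (a vertex contributing to $\partial S$ has $d_{\mathrm{out}}(u)\ge 1$ on the side where it owns an outgoing crossing edge, which is exactly the side where it appears in the sum). No step beyond these per-vertex counting arguments is needed, so I do not anticipate a real obstacle; the result is essentially a sharpening of the Ballard et al.\ bound that retains per-vertex out-degree information instead of collapsing to $d_{\max}$.
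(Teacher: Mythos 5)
Your proposal is correct and follows essentially the same route as the paper: it reduces to the Ballard et al.\ lemma by bounding $|R_S|$ and $|W_S|$ from below by the out-degree--normalized edge counts over $\partial S$, summing over the partition, and then taking the max over $P$ and min over $X$. Your per-vertex counting argument is just a more explicit justification of the two inequalities the paper states directly, so there is nothing substantive to add.
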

\begin{proof}

 We bound $|R_S|$ and $|W_S|$ as:
$$|R_S| \geq \sum_{(u,v) \in E} \frac{\1\{u \notin S, v \in S\}}{d_{out}(u)}, \quad |W_S| \geq \sum_{(u,v) \in E} \frac{\1\{u \in S, v \notin S\}}{d_{out}(u)}.$$

Summing reads and writes, we have:
$$|R_S| + |W_S| \geq \sum_{(u,v) \in \partial S} \frac{1}{d_{out}(u)}.$$
Minimizing over all $X$, we get the full bound
\begin{equation}
    J^*_G \geq \min_{X \in \mathcal{O{_G}}}  \max_{P \in \mathcal{P}_X} \left(\sum_{S \in P}\sum_{(u,v) \in \partial S} \frac{1}{d_{out}(u)}\right) - 2M |P|.
\end{equation}
\end{proof}

Intuitively, an adversary picks some evaluation order $X$ on $G$. We pick a hard partition $P$ on $X$ to maximize the I/O incurred.  In the next section, we formalize Theorem \ref{objective} as a quadratic program using the graph Laplacian of an out-degree normalized graph. We then lower bound the I/O cost via the eigenvalues of the Laplacian. 

\subsection{Formulation via the Graph Laplacian}

In Theorem \ref{objective}, we solved for the minimum order over a maximum partition. However, since any partition will give us a lower bound, we can choose to split our graph into evenly sized segments. We pick some $k\leq n$ as our number of segments: splitting into $k$ subsets of as equally as possible (such that the first $n \mod k$ segments have $\floor{n/k} +1$ vertices and the rest have $\floor{n/k}$ vertices). For an evaluation order $X$, let $P^{(X,k)} \in P_\mathcal{X}$ be the $k$-partition described above. If $P^{(I,k)}$ would be the above partition assuming an identity evaluation order $X=I_k$, then we can define $\hat{W}^{(k)} \in \mathbb{R}^{n \times k}$ as $(\hat{W}^{(k)})_{ij} = \1\{i \in P^{(I,k)}_j\}$. Then $X\hat{W}^{(k)} \in \mathbb{R}^{n \times k}$ is the partition matrix for the $k$-partition $P^{(X,k)}$.

We transform our graph directed $G$ into a weighted undirected graph as follows: for each directed edge $(u,v) \in G$, we add the undirected edge $(u,v)$ to $\tilde{G}$ with weight $\frac{1}{d_{out}(u)}$. Henceforth, we indicate the degree function, degree matrix, and adjacency matrix of the original $G$ as $d(v), D$, and $A$ respectively; we analogously denote $\tilde{d}$, $\tilde{D}$, $\tilde{A}$ as the degree function, degree matrix, and adjacency matrix of $\tilde{G}$. 

Let $\tilde{L} = \tilde{D} - \tilde{A}$ be the graph Laplacian of $\tilde{G}$. $\tilde{L}$ is positive semi-definite, so all of its eigenvalues are nonnegative. The Laplacian is convenient for expressing the edge boundaries of vertex subsets. Specifically, for subset $S \subseteq V$, let $x \in \mathbb{R}^n$ be the one-hot encoding of $S$ (i.e $x_i = \1\{v_i \in S\}$). Then:
\begin{equation}\label{laplacianedges}
    x^T\tilde{L} x = x^T\tilde{D}x - x^T\tilde{A} x  = \sum_{(u,v) \in \partial S} \frac{1}{d_{out}(u)}.
\end{equation}

Using this property we can bound the edge crossing as:
$$\tr((\hat{W}^{(k)})^TX^T \tilde{L}X\hat{W}^{(k)}) = \sum_{S \in P^{(X,k)}} \sum_{(u,v) \in \partial S} \frac{1}{d_{out}(u)}.$$

Letting $W^{(k)} = \hat{W}^{(k)}\hat{W}^{(k)^T}$, and rewriting Equation \ref{eq_objective} leads to the following quadratic program:
\begin{theorem}[I/O Bound via Graph Laplacian]\label{bound_program}
    For a computation graph $G$ and any $k \leq n$ with $\tilde{L}$ and $W^{(k)}$ defined as above, $J_G^*$ is lower bounded by the solution of:
\begin{align*}
    \text{minimize}_X \quad & \max_{k} \tr(X^T\tilde{L}XW^{(k)}) - 2 k M\\
    & X \in \mathcal{O}_G.
\end{align*}
\label{laptheorem}
\end{theorem}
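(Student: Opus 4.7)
The plan is to deduce Theorem \ref{bound_program} directly from Theorem \ref{objective} by (i) restricting the inner maximum over all topological partitions to the specific one-parameter family $\{P^{(X,k)}\}_{k \leq n}$ of even $k$-partitions, and (ii) rewriting the resulting edge-boundary sum as the trace quantity $\tr(X^T\tilde L X W^{(k)})$ via the Laplacian identity already established in Equation \ref{laplacianedges}.

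First, since each $P^{(X,k)}$ lies in $\mathcal{P}_X$, the maximum over $\mathcal{P}_X$ dominates $\max_{k}$ taken only over this subfamily, so the bound from Theorem \ref{objective} still holds after the restriction. It remains to match the quadratic form. The key identity I would prove is
$$\tr\bigl(X^T \tilde L X W^{(k)}\bigr) \;=\; \sum_{S \in P^{(X,k)}} \sum_{(u,v) \in \partial S} \frac{1}{d_{out}(u)},$$
together with $|P^{(X,k)}| = k$, which immediately supplies the $-2kM$ correction term in the statement.

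To prove the identity, I would use $W^{(k)} = \hat W^{(k)}(\hat W^{(k)})^T$ and cyclicity of the trace to rewrite
$$\tr\bigl(X^T \tilde L X W^{(k)}\bigr) = \tr\bigl((\hat W^{(k)})^T X^T \tilde L X \hat W^{(k)}\bigr) = \sum_{j=1}^{k} (X\hat W^{(k)})_{\cdot j}^{\,T}\,\tilde L\,(X\hat W^{(k)})_{\cdot j}.$$
The column $(X\hat W^{(k)})_{\cdot j}$ is precisely the indicator vector $x_{S_j}$ of the $j$-th contiguous segment of the order $X$ under the even $k$-partition pattern: $\hat W^{(k)}$ encodes the partition assuming the identity order, and left-multiplication by the permutation $X$ reindexes rows so that the nonzeros in column $j$ mark exactly the vertices that $X$ places into slot $j$. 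Applying Equation \ref{laplacianedges} term by term then yields the edge-boundary sum above. Taking the outer minimum over $X \in \mathcal{O}_G$ and combining with Theorem \ref{objective} concludes the argument.

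This proof is essentially a translation of the combinatorial bound into linear algebra, and I do not anticipate a hard step beyond careful bookkeeping. The only place to be attentive is verifying that $X\hat W^{(k)}$ has the segment indicators of $P^{(X,k)}$ as its columns (as opposed to, say, $\hat W^{(k)} X$, which would correspond to a different permutation convention); the convention in the paper—rows of $X$ index time-steps, columns index vertices—makes this immediate once unpacked.
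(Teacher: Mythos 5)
Your proposal is correct and matches the paper's own derivation: the paper likewise restricts the maximum over $\mathcal{P}_X$ to the even $k$-partitions $P^{(X,k)}$, uses Equation \ref{laplacianedges} column-by-column on the partition matrix $X\hat{W}^{(k)}$ to obtain $\tr((\hat{W}^{(k)})^TX^T\tilde{L}X\hat{W}^{(k)})=\sum_{S\in P^{(X,k)}}\sum_{(u,v)\in\partial S}1/d_{out}(u)$, and folds in $|P^{(X,k)}|=k$ for the $-2kM$ term. The transpose convention you flag is indeed the only point of care, and your reading is consistent with the paper's.
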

In the next section, we relax the above optimization problem to find a lower bound on the objective using the eigenvalues of $\tilde{L}$ and $W^{(k)}$.

\subsection{Spectral Bounds}

We derive the following eigenvalue bound:
\begin{theorem}[Spectral Method]\label{laplacianbound}
    \begin{equation} J_G^* \geq  \floor{\frac{n}{k}} \sum_{i=1}^k  \lambda_i(\tilde{L})  - 2kM.
    \end{equation}
\end{theorem}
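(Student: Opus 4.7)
The plan is to start from the quadratic program in Theorem \ref{bound_program} and produce a spectral lower bound by applying a trace inequality (Ruhe / von Neumann) to the product $X^T\tilde{L}X W^{(k)}$. Since any evaluation order $X$ is a permutation matrix, $X^T\tilde{L}X$ is orthogonally similar to $\tilde{L}$ and thus has the same spectrum. So the problem reduces to bounding $\tr(A W^{(k)})$ uniformly over all symmetric $A$ with the eigenvalues of $\tilde{L}$, which is exactly the setting where Ruhe's rearrangement inequality gives a clean lower bound.

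First I would work out the spectrum of $W^{(k)} = \hat{W}^{(k)}(\hat{W}^{(k)})^T$. By construction $\hat{W}^{(k)}$ is an $n \times k$ indicator matrix whose columns are the $0/1$ indicators of the $k$ blocks in $P^{(I,k)}$, so $W^{(k)}$ is block-diagonal with $k$ all-ones blocks of sizes either $\lfloor n/k \rfloor$ or $\lfloor n/k \rfloor+1$. Each such block of size $m$ contributes one eigenvalue equal to $m$ and $m-1$ zero eigenvalues, so $W^{(k)}$ has rank $k$ with its nonzero eigenvalues being the block sizes, each at least $\lfloor n/k \rfloor$, and its remaining $n-k$ eigenvalues equal to $0$.

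Next I would invoke Ruhe's trace inequality: for symmetric $A, B$ with eigenvalues arranged as $\alpha_1 \le \cdots \le \alpha_n$ and $\beta_1 \ge \cdots \ge \beta_n$,
\begin{equation*}
\tr(AB) \;\ge\; \sum_{i=1}^n \alpha_i \beta_i.
\end{equation*}
Applying this with $A = X^T\tilde{L}X$ (whose ascending eigenvalues are $\lambda_1(\tilde{L}) \le \cdots \le \lambda_n(\tilde{L})$, all nonnegative because $\tilde{L}$ is PSD) and $B = W^{(k)}$ (whose top $k$ eigenvalues are $\ge \lfloor n/k \rfloor$ and the rest are zero), only the first $k$ terms of the pairing survive, each of the form $\lambda_i(\tilde{L}) \cdot (\text{block size}) \ge \lambda_i(\tilde{L}) \cdot \lfloor n/k \rfloor$. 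Thus
\begin{equation*}
\tr(X^T\tilde{L}X W^{(k)}) \;\ge\; \floor{\tfrac{n}{k}} \sum_{i=1}^k \lambda_i(\tilde{L}),
\end{equation*}
independently of $X$. Plugging this into the objective of Theorem \ref{bound_program} and subtracting $2kM$ yields the claim. Since the inequality holds for every admissible $k$, one obtains the stated bound for any particular $k$ (and may then optimize over $k$).

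The main obstacle is just ensuring the trace inequality is applied with the correct ordering: one is tempted to pair largest-with-largest (which would give an \emph{upper} bound), whereas here we need the opposite-order pairing to obtain a valid \emph{lower} bound. The secondary bookkeeping item is verifying that the block sizes really are at least $\lfloor n/k \rfloor$; this is immediate from the construction of $P^{(I,k)}$, which places the larger blocks of size $\lfloor n/k \rfloor + 1$ first and the remaining blocks at the minimum size. Everything else — PSD-ness of $\tilde{L}$, permutation-similarity of $X^T\tilde{L}X$ and $\tilde{L}$, and the rank-$k$ structure of $W^{(k)}$ — is routine.
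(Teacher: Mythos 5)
Your proposal is correct and follows essentially the same route as the paper: both reduce Theorem \ref{bound_program} to the minimal-dot-product trace inequality for $\tr(X^T\tilde{L}XW^{(k)})$ (the paper cites Finke et al.'s Theorem 3 over orthogonal $X$, you cite Ruhe's inequality pointwise after noting permutation-similarity preserves the spectrum), and both identify the spectrum of $W^{(k)}$ as $k$ nonzero eigenvalues of size at least $\floor{n/k}$ plus $n-k$ zeros. Your version is marginally cleaner in that it avoids the explicit relaxation to orthogonal matrices, and you correctly flag the one real pitfall (using the opposite-order pairing to get a lower rather than upper bound), but the substance is identical.
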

\begin{proof}
We relax the topological constraint $X \in \mathcal{O}_G$, and instead constrain over orthogonal $X$. We thus have for any $k$:
$$J_G^* \geq \tr(X^T\tilde{L}XW^{(k)}) - 2kM \quad \text{s.t } X^TX=XX^T = I.$$

For symmetric $\tilde{L}, W$ and orthogonal matrix $X$, where $\lambda_1, ..., \lambda_n$ and $\mu_1, ..., \mu_n$ are the eigenvalues in increasing order of $\tilde{L}$ and $W$ respectively, we have $\tr(X^T\tilde{L}XW) \geq \sum_{i=1}^n \lambda_i \mu_{n-i}$, or the minimal dot product of $\lambda$ and $\mu$ (see \cite{finke1987quadratic}, Theorem 3). Here  $W^{(k)}$ is a block diagonal matrix, with $n-k$ zero eigenvalues and  $k$ eigenvalues that are at least $\floor{n/k}$. Therefore, we apply our lower bound as a sum of the first $k$ eigenvalues of $\tilde{L}$:
\begin{equation*} J_G^*  \geq \max_k \tr(X^T\tilde{L}XW^{(k)}) - 2kM \geq \sum_{i=1}^k \floor{n/k} \lambda_i(\tilde{L})  - 2kM,
\end{equation*}
\end{proof}

This bound can be found in $O(n^3)$ time. We first find the eigenvalues $\lambda(\tilde{L})$ in $O(n^3)$. We then iterate over possible values of $k$ which takes constant time per iteration to find the best eigenvalue. However, we generally only need small number of eigenvalues to find a good $k$. Since any value of $k$ is a lower bound, it suffices to find the $h$ smallest eigenvalues of $L$. These values can be found using a method such as Lanczos-Arnoldi with time complexity $O(hn^2)$: this complexity decreases even further with sparse $L$ using sparse eigenvalue solvers. 

For closed form analysis, sometimes the exact form of the original Laplacian spectra $\lambda(L)$ are known, but the spectra of our out-degree normalized Laplacian $\lambda(\tilde{L})$ are not. While $\lambda(\tilde{L})$ can be easily computed automatically, they can be harder to derive for closed form analysis. We can naturally loosen the bound in Theorem \ref{laplacianbound} to be in terms of $L$ rather than $\tilde{L}$. 
\begin{theorem}[Spectral Method with Original Graph Laplacian]\label{looserlaplacianbound}
    \begin{equation} J_G^* \geq  \frac{1}{\max_{v \in V} d_{out}(v)} \floor{\frac{n}{k}} \sum_{i=1}^k  \lambda_i(L)  - 2kM.
    \end{equation}
\end{theorem}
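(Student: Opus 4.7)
The plan is to derive Theorem \ref{looserlaplacianbound} directly from Theorem \ref{laplacianbound} by comparing the two Laplacians $\tilde{L}$ and $L$ in the Loewner (positive semidefinite) order, and then invoking eigenvalue monotonicity. The key observation is that $\tilde{L}$ differs from $L$ only in that each edge is reweighted by a factor of $1/d_{out}(u)$, and these weights are all at least $1/\max_{v \in V} d_{out}(v)$.

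First I would write out the weighted-Laplacian quadratic form associated with $\tilde{L}$. Since $\tilde{L}$ corresponds to the undirected graph in which each directed edge $(u,v) \in E$ contributes weight $1/d_{out}(u)$ to the undirected edge $\{u,v\}$, the standard identity gives
$$x^T \tilde{L} x = \sum_{(u,v) \in E} \frac{1}{d_{out}(u)} (x_u - x_v)^2.$$
Letting $d_{\max} = \max_{v \in V} d_{out}(v)$, each weight satisfies $1/d_{out}(u) \geq 1/d_{\max}$, so
$$x^T \tilde{L} x \geq \frac{1}{d_{\max}} \sum_{(u,v) \in E} (x_u - x_v)^2 = \frac{1}{d_{\max}} x^T L x,$$
which establishes the Loewner inequality $\tilde{L} \succeq \frac{1}{d_{\max}} L$.

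Next I would apply Weyl's monotonicity theorem (equivalently the Courant-Fischer min-max characterization): if $A \succeq B$ for symmetric $A$ and $B$, then their eigenvalues in increasing order satisfy $\lambda_i(A) \geq \lambda_i(B)$ for every $i$. Applied to $\tilde{L}$ and $\frac{1}{d_{\max}} L$, this yields $\lambda_i(\tilde{L}) \geq \frac{1}{d_{\max}} \lambda_i(L)$ for each $i$, and summing the first $k$ such inequalities preserves the bound.

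Finally, I would substitute this into the conclusion of Theorem \ref{laplacianbound}:
$$J_G^* \geq \floor{n/k} \sum_{i=1}^k \lambda_i(\tilde{L}) - 2kM \geq \frac{1}{d_{\max}} \floor{n/k} \sum_{i=1}^k \lambda_i(L) - 2kM,$$
which is exactly the claim. There is no real obstacle here, since the substantive spectral argument already resides in Theorem \ref{laplacianbound}; this corollary-style result is just the observation that a uniform lower bound on edge weights passes through Weyl's inequality to the eigenvalue sums. The only care needed is in confirming that the undirected graph underlying $L$ and $\tilde{L}$ is unambiguously defined, which holds because $G$ is a DAG and therefore its underlying undirected graph has no parallel edges.
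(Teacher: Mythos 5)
Your proof is correct, but it takes a genuinely different route from the paper's. The paper re-runs its entire pipeline with the unweighted Laplacian: it returns to the counting bound of Theorem \ref{objective}, weakens $|R_S|+|W_S|$ to $|\partial S|/\max_v d_{out}(v)$, reformulates the quadratic program of Theorem \ref{bound_program} with $L$ in place of $\tilde{L}$, and then repeats the trace-eigenvalue relaxation. You instead derive the statement as a one-step corollary of Theorem \ref{laplacianbound}: the quadratic-form comparison $x^T\tilde{L}x = \sum_{(u,v)\in E} \frac{1}{d_{out}(u)}(x_u-x_v)^2 \geq \frac{1}{d_{\max}} x^T L x$ gives the Loewner inequality $\tilde{L} \succeq \frac{1}{d_{\max}}L$, and Weyl/Courant--Fischer monotonicity then gives $\lambda_i(\tilde{L}) \geq \frac{1}{d_{\max}}\lambda_i(L)$ termwise, which you substitute into $J_G^* \geq \floor{n/k}\sum_{i=1}^k \lambda_i(\tilde{L}) - 2kM$. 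Your argument is shorter and buys something the paper only asserts informally: it proves that the $\tilde{L}$-based bound of Theorem \ref{laplacianbound} always dominates the $L$-based bound, eigenvalue by eigenvalue, rather than merely producing a second bound by a parallel derivation. The paper's approach, in exchange, keeps the combinatorial meaning visible (the unnormalized bound comes from counting $|\partial S|$ directly), which is what makes the closed-form applications to the hypercube and butterfly graphs read naturally. Your side remark about parallel edges is the right thing to check and is indeed settled by acyclicity; the only other hypothesis worth stating is that every edge source has $d_{out}(u)\geq 1$ so that each weight $1/d_{out}(u)$ is well defined and at least $1/d_{\max}$, which is automatic.
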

\begin{proof}
We follow the same steps of Theorem \ref{laplacianbound}, but we bound Equation \ref{eq_objective} as 
$$J^*_G \geq \min_{X \in \mathcal{O{_G}}}  \max_{P \in \mathcal{P}_X} \left(\sum_{S \in P} \frac{|\partial S|}{\max_{v \in V} d_{out}(v)} \right) - 2M |P|.
$$

We can then reframe Equation \ref{laplacianedges} in terms of $L$, noting that, if $x \in \mathbb{R}^n$ is the one-hot encoding of $S \subseteq V$, then $x^TLx = |\partial S|.$ Using the same partitioning argument and definition of $W^{(k)}$, we can reframe the quadratic program in Theorem \ref{bound_program} with $L$ instead of $\tilde{L}$:
\begin{align*}
    \text{minimize}_X \quad & \max_{k} \frac{\tr(X^TLXW^{(k)})}{\max_{v \in V} d_{out}(v)}  - 2 k M\\
    & X \in \mathcal{O}_G.
\end{align*}

Then, following the same spectral argument as in Theorem \ref{laplacianbound}, we arrive at a looser, but more convenient bound:
$$J_G^* \geq  \frac{1}{\max_{v \in V} d_{out}(v)} \floor{\frac{n}{k}} \sum_{i=1}^k  \lambda_i(L)  - 2kM.$$
\end{proof}

\subsection{Parallel Spectral Bounds}
We generalize Theorem \ref{laplacianbound} to the parallel setting as follows. Suppose that we have $p$ processors, each with memory $M$. As in \cite{irony2004communication,ballard2012graph} we count I/O as the communication with a processor to slow memory or between processors. We make no assumptions about the distribution of the workload.

\begin{theorem}[Parallel Spectral Bound]
For a computation graph $G$ distributed across $p$ processors, at least one of the processors has I/O $J_G^*$ lower bounded by:
\begin{equation} J_G^* \geq  \floor{\frac{n}{kp}}  \sum_{i=1}^k \lambda_i(\tilde{L})  - 2kM.
\end{equation}    
\end{theorem}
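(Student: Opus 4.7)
The plan is to reduce the parallel bound to the sequential spectral argument applied to a single heavily loaded processor. First, by a pigeonhole argument over the $n$ operations distributed across $p$ processors, I would pick a processor $q^*$ that evaluates at least $\lceil n/p \rceil$ vertices and let $V_{q^*}$ denote this set. Since the local evaluation on $q^*$ must respect all data dependencies, its order is topological on the subgraph induced by $V_{q^*}$, so I can select $k$ disjoint segments $S_1,\ldots,S_k \subseteq V_{q^*}$ that are consecutive in this local order, each of size exactly $\floor{n/(kp)}$; this is feasible because $k\cdot\floor{n/(kp)} \leq n/p \leq |V_{q^*}|$.

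Next I would invoke Lemma 1 separately on each $S_j$, applied on processor $q^*$. The key observation is that in the parallel communication model of \cite{irony2004communication, ballard2012graph}, an operand required by $S_j$ that is missing from $q^*$'s local fast memory---whether it lives in slow memory or on another processor---must be transferred into $q^*$, and any value produced in $S_j$ still needed elsewhere must leave $q^*$'s fast memory. Hence the sets $R_{S_j}$ and $W_{S_j}$, defined through the edges of the full graph $G$, still lower bound the I/O that $q^*$ incurs during $S_j$. Summing over the $k$ segments and applying the same degree-normalized reduction as in Theorem \ref{objective} gives
\[
J_G^* \;\geq\; \sum_{j=1}^k \sum_{(u,v)\in\partial S_j}\frac{1}{d_{out}(u)} \;-\; 2kM
\]
for the I/O of processor $q^*$, where $\partial S_j$ is measured in $\tilde{G}$.

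Finally, I would rewrite this sum as a trace against $\tilde{L}$ and apply Finke's inequality exactly as in the proof of Theorem \ref{laplacianbound}. Setting $\hat{W}\in\{0,1\}^{n\times k}$ to have indicator columns for $S_1,\ldots,S_k$ and $W=\hat{W}\hat{W}^T$, the double sum above equals $\tr(\tilde{L}W)$. Disjointness and uniform size of the segments force $\hat{W}^T\hat{W}=\floor{n/(kp)}\,I_k$, so the nonzero spectrum of $W$ consists of $k$ copies of $\floor{n/(kp)}$ together with $n-k$ zeros. Finke's inequality, applied with $X=I$, then pairs these nonzero eigenvalues against the $k$ smallest eigenvalues of $\tilde{L}$, yielding $\tr(\tilde{L}W)\geq \floor{n/(kp)}\sum_{i=1}^k \lambda_i(\tilde{L})$, which combined with the previous display gives the claim. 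The main obstacle is Step 2: extending Lemma 1 to the parallel setting relies on the convention that inter-processor communication is counted as I/O at the relevant processor, per the model of \cite{irony2004communication, ballard2012graph}; without that convention, edges crossing between processors would drop out of the boundary sum and the bound would weaken.
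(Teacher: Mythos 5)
Your proof is correct and follows essentially the same route as the paper: pigeonhole onto a processor holding at least $n/p$ vertices, apply the per-segment read/write lemma locally (relying on the convention that inter-processor transfers count as I/O at that processor), and bound the resulting boundary sum spectrally. If anything, your final step---applying Finke's inequality directly to the rank-$k$ indicator matrix of the $k$ chosen segments, whose nonzero spectrum is $k$ copies of $\lfloor n/(kp)\rfloor$---is a cleaner execution than the paper's detour through partitioning the entire graph into $kp$ parts and keeping the $k$ cheapest.
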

\begin{proof}
    For a given evaluation of $G$, each vertex in $G$ is evaluated by one processor. Let $V_1,...,V_p$ be the vertex sets associated with each processor. Then given the optimal evaluation order $X$ we can define $p$ evaluation orders $X_1,...,X_p$ where $X_i$ indicates the evaluation order of processor $i$ on its vertex set $V_i$. Since memory is local to each processor, we can use the same graph partitioning machinery in Theorem \ref{laplacianbound} per-processor. For processor $i$, if $\mathcal{P}_{X_i}$ is the set of valid partitions over $X_i$, the I/O of processor $i$ is lower bounded by
    $$J_G(X_i) \geq \max_{P \in \mathcal{P}_{X_i}} \left(\sum_{S \in P} |R_S| + |W_S|\right) - 2M|P|.$$

    There must exist one processor $i^*$ for which $|V_{i^*}| \geq n/p$. For this processor, we can partition $V_{i^*}$ into $k$ equal parts of $\frac{n}{kp}$ vertices each (call this partition $P$). Since any partition of $V_{i^*}$ is a lower bound, we have 
    $$J_G(X_{i^*}) \geq \left(\sum_{S \in P}\sum_{(u,v) \in \partial S} \frac{1}{d_{out}(u)}\right) - 2kM.$$

    However, as a looser lower bound, instead of restricting $P$ to equal partitions of $V_{i^*}$, we can consider the set of equal partitions of the entire graph into $kp$ parts of $\frac{n}{kp}$, and simply pick the $k$ sections incurring the least I/O cost. This is then equivalent to the bound in Theorem \ref{laplacianbound} with $kp$ partitions, but we only count I/O from the first $k$ parts (which correspond to the smallest $k$ eigenvalues). We then have 
    $$J_G^* \geq   \floor{\frac{n}{kp}} \sum_{i=1}^k \lambda_i(\tilde{L})  - 2kM.$$
\end{proof}
\section{Analytical Bounds for Specific Graphs}
For computation graphs with known eigenvalues, we can compute the bound in Theorem \ref{looserlaplacianbound} directly. 
We perform this analysis for the Bellman-Held-Karp algorithm for the Traveling Salesman Problem and as well as the Fast Fourier Transform, which have a hypercube and butterfly computation graph respectively. For both of these problems, we consider solely \textit{nontrivial I/O}, which does not count reading inputs or writing outputs. In the process, we derive the spectrum of the FFT graph in Appendix \ref{ffteigappendix}; to our knowledge, this is the first closed form of the spectrum of the unwrapped butterfly graph that includes multiplicity.  Finally, we present a probabilistic bound on the I/O of a random Erd\H{o}s R\'enyi graph.

Previously, \cite{redbluepebble} found an asymptotically tight bound of $\Omega(\frac{l 2^l}{\log M})$ for a $2^l$ point FFT through manual inspection of $2S$ partitions. We find that our spectral bound of $\Omega(\frac{l 2^l}{\log^2 M})$ is only a factor of $1/\log M$ off from this published tight bound.

\subsection{Hypercube Graph}
\label{hypercube_closedform}
\begin{figure}[t]
    \includegraphics[width=0.3\textwidth]{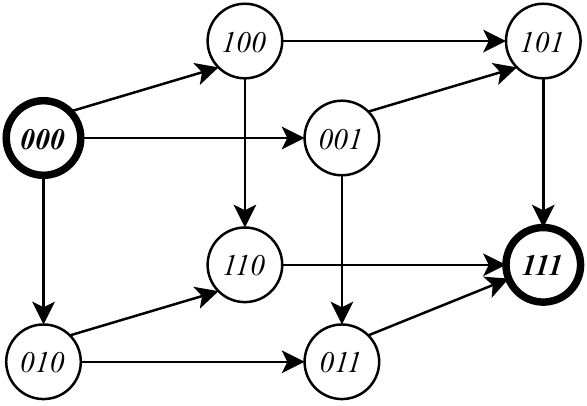}
    \caption{Bellman-Held-Karp computation graph with 3 cities. The graph is a 3 dimensional boolean hypercube, with starting point $000$ and ending point $111$.}
    \vspace{-5mm}
    \label{theoreticalgraphs_hypercube}
\end{figure}

The hypercube is a computation graph that appears as a result of many hard dynamic programming problems \cite{ambainis2019quantum}. For example, consider the well-known Bellman-Held-Karp algorithm which uses dynamic programming to solve the traveling salesman problem \cite{bellman1962dynamic,held1962dynamic}. The approach solves for the optimal path visiting a subset of the cities by leveraging the computed optimal paths through adjacent subsets with one fewer city. 

The Bellman-Held-Karp algorithm with $l$ cities can be naturally formulated as an iteration on the vertices of a boolean hypercube. We encode ``cities visited'' as a length $l$ binary string. Let $Q_l$ be a boolean $l$-dimensional hypercube, where each vertex $k$ is a length $l$ binary string, and $(k_1, k_2) \in E$ if $k_2$ can be constructed by setting a 0 in $k_1$ to 1. Let $S(k, i)$ be the shortest path visiting all the cities active in $k$ and ending up at the $i$'th city. Then to solve TSP, we seek to find the solution set $Y[k] = \{S(k, i) \mid k[i] = 1\}$ for each vertex $k$ in the boolean hypercube $Q_l$. For example, $Y[01101]$ contains three paths that have traversed cities 2, 3, and 5, where each path has a different ending point. $Y[k]$ can be easily computed given the results of $k$'s incoming neighbors in $Q_l$, and $Y[\{1\}^l]$ gives the solution to the TSP. Thus $Q_l$ represents the computation graph for the under this formulation of the Bellman-Held-Karp algorithm. An example of this graph can be seen in Figure \ref{theoreticalgraphs_hypercube}.

The I/O bound for this formulation of the Bellman-Held-Karp algorithm can then be found via our spectral method, because the hypercube has relatively simple eigenvalues. The $l$-dimensional hypercube has $n=2^l$ vertices and Laplacian eigenvalues $2i$ for $i=0,...,l$ with multiplicity ${l \choose i}$. If we choose $k$ to encompass the top eigenvalues up to $i=\alpha$, we have $k = \sum_{i=0}^\alpha {l \choose i}$. The maximal out degree is $l$. For any $\alpha < 2^l$:
\begin{align*} J_G^*  &\geq \frac{1}{\hat{d}_{out}} \frac{2^{l+1}}{\sum_{i=0}^\alpha {l \choose i}} \sum_{i=0}^\alpha i {l \choose i} - 2M \sum_{i=0}^\alpha {l \choose i}\\
&=  \sum_{i=0}^\alpha {l \choose i} \left( 
    i \frac{2^{l+1}}{l \sum_{i=0}^\alpha {l \choose i}} - 2M
    \right).
\end{align*}

While any $\alpha<l$ would be a lower bound, for simplicity we here choose $\alpha = 1$ (i.e $k=l+1$):
$$J_G^*  \geq  \frac{2^{l+1}}{(l+1)} - 2M(l+1) .$$
For a tighter bound we can optimize more specifically over $\alpha$. We see that this bound is nontrivial as long as $M \leq \frac{2^{l}}{(l+1)^2}$.



\subsection{Fast Fourier Transform Graph}

\begin{figure}[t]
    \includegraphics[width=0.35\textwidth]{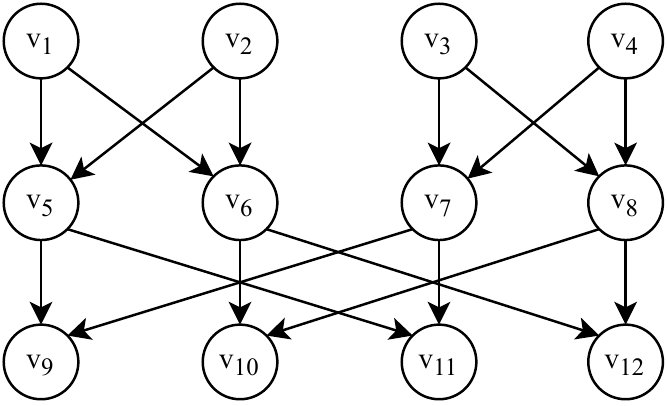}
    \caption{Computation graph for a $2^l=4$ point Fast Fourier Transform.}
    \label{theoreticalgraphs_fft}
    \vspace{-5mm}
\end{figure}

The Fast Fourier Transform (FFT) computation graph is a butterfly graph. For a $2^l$ point FFT, the butterfly graph $B_l$ has $(l+1)2^l$ vertices, which can be arranged into $l+1$ columns. A $2^l=4$ point FFT can be seen in Figure \ref{theoreticalgraphs_fft}.  The butterfly graph can be defined inductively: allow $B_0$ to be defined as a single vertex. Then $B_l$ can be constructed as two copies of $B_{l-1}$ that are joined by an extra final column of $(l+1)$ vertices.

We derive the eigenvalues of the Laplacian of $B_l$ in Appendix \ref{ffteigappendix}. To our knowledge, closed forms with multiplicities were only previously known for the wrapped butterfly graph \cite{comellas2003spectra}. The eigenvalues we derive are:
\begin{align*}
    &4-4 \cos\left(\frac{\pi j}{l+1}\right), \forall j=0,...,l; \; \text{multiplicity } 1\\
    &4-4\cos \left(\frac{\pi(2j+1)}{2i+1}\right)\forall i=1,...,l; j=0,...,i-1; \; \text{multiplicity } 2^{l-i+1}\\
    & 4 - 4\cos\left( \frac{j\pi}{i+1}\right) \forall i=1,...,l-1; j=1,...,i; \text{multiplicity } (l-i)2^{l-i-1}.
\end{align*}

The smallest eigenvalue is 0 (from the first expression), but the next eigenvalues are governed by the second expression with $j=0$ as long as $2i+1 \geq l+1$. We choose some $\alpha$ and set $k = 2^{\alpha+1} $. We compute the lowest $k$ eigenvalues of the Laplacian of $B_l$. Of these eigenvalues, $2^{\alpha}$ have (with $i=l-\alpha$):
$$\lambda = 4-4\cos \left(\frac{\pi}{2(l-\alpha)+1}\right).$$
To compute our lower bound, we assume the other eigenvalues are 0. We note that $n = (l+1)2^l$. Then we have (dividing by our maximal out-degree $2$):
\begin{align*}
    J_G^* \geq  (l+1)2^{l}\left(1 - \cos \left(\frac{\pi}{2(l-\alpha)+1}\right) \right) - 2^{\alpha+2}M.
\end{align*}
Suppose that we set $\alpha = l - \log_2 M$, under the assumption that $M \ll l$. Then:
$$J_G^* \geq  (l+1)2^{l}\left (
    1 - \cos \left(\frac{\pi}{2 \log_2 M+1}\right) - \frac{4}{l+1}
\right).$$
To see how this behaves, we can use the small angle approximation $\theta^2/2 \approx 1- \cos(\theta)$ for small $\theta$ to get:
$$J_G^* \geq  (l+1)2^{l}\left(\frac{\pi^2}{ 8\log_2^2 M } - \frac{4}{l+1}\right).$$

Thus, for large $M$ and $l$ where $M \ll l$, our bound behaves at least as well as $\Omega (\frac{l2^l}{\log^2 M})$. This bound is only a $1/\log_2 M$  factor worse than the tight lower bound for butterfly graphs: $\Omega (\frac{l 2^l}{\log M})$, which is computed by inspection on the specific graph using S-partitions \cite{redbluepebble}.  

\subsection{Random Graphs}
The spectral bound is flexible, and can perform well on most graphs with high connectivity regardless of its structure. In the following section, we characterize the performance of this spectral bound given a random graph, and show that the spectral bound provides nontrivial results for as long as the graph is well-connected. While this graph is not a specific computation graph, examining random graphs allows us to understand the performance of the bound as we increase the connectivity of the graph. 

We consider an Erd\H{o}s R\'enyi graph $G(n,p)$ on $n$ vertices, where each edge is determined by flipping a coin with probability $p$. We will only deal with the regime where $p \geq \frac{\log n}{n}$, where the graph is almost surely connected \cite{kolokolnikov2014algebraic}.

We begin with the case where $p = \Theta(\frac{\log n}{n})$, but the graph is still connected. More specifically, we specify $p = p_0 \frac{\log n}{n-1}$ for some $p_0 > 6$. By \cite{kolokolnikov2014algebraic}, in this regime:
$$\lambda_2 \sim p_0 \log n\left( 1 - \sqrt{\frac{2}{p_0}} + O(\frac{1}{p_0}) + O(\frac{1}{\sqrt{p_0 \log n}})\right).$$

We first concentrate the maximum degree of the graph using Chernoff's bound as in \cite{chung2011spectra}. We first note that the degree $d$ of a single vertex is governed by the sum of $n-1$ Bernoulli random variables with probability $p$. The expected degree is $\mu = p (n-1) = p_0\log n$. Then using Chernoff's bound, we have:
\begin{align*}
    P(d \geq (1+\delta)\mu) &\leq \exp(-\mu \delta^2/3)\\
    P(d \geq (1+\delta) p_0 \log n) & \leq  \exp(\frac{-\delta^2p_0 \log n }{3}).
\end{align*}

If we set $\delta = \sqrt{6/p_0}$, we concentrate individual degrees as 
$P(d \geq (1+\sqrt{6/p_0})p_0 \log n) \leq 1/n^2.$
Then, using the union bound, we can concentrate the maximum degree as:
$$P(d_{max} \geq (1+\sqrt{6/p_0})p_0 \log n) \leq 1/n$$

Thus, with high probability ($1/n \rightarrow 0$ as $n\rightarrow \infty$), we have 
$$d_{max} \geq (1+\sqrt{6/p_0})p_0 \log n.$$

Setting $k=2$ in Theorem \ref{looserlaplacianbound}, we have with high probability:

$$J_G^* \leq \frac{n}{(1+\sqrt{6/p_0})}\left( 1 - \sqrt{\frac{2}{p_0}} + O(\frac{1}{p_0}) + O(\frac{1}{\sqrt{p_0 \log n}})\right) - 4M.$$

As $n \rightarrow \infty$ this bound scales roughly with $n + \frac{n}{\sqrt{\log n}}$, and is linear in $M$. Our bound becomes weaker, but still nontrivial, when we consider regimes with higher $p$. This is because as $p$ increases, the maximum degree scales to almost $np$ (and our bound requires dividing by the maximum out degree). For example, consider the case where $\frac{np}{\log n}\rightarrow \infty$, as in this regime the graph is essentially regular with degree $np$. Then from \cite{kolokolnikov2014algebraic}, we have that with high probability as $n \rightarrow \infty$:
$$\lambda_2(L) = np + O(\sqrt{np \log n}).$$
We then can apply Theorem \ref{looserlaplacianbound} to lower bound the non-trivial I/O (setting $k=2$) and dividing by the max degree $np$:
$$J_G^* \geq \frac{n}{2}(1 + O(\sqrt{\frac{\log n}{np}})) - 4M.$$
As $n \rightarrow \infty$, $O(\sqrt{\frac{\log n}{np}})$ will decay to zero resulting in a bound linear in $n$. 
\section{Evaluation}
\begin{figure}
    \minipage{0.35\textwidth}
        \includegraphics[width=\linewidth]{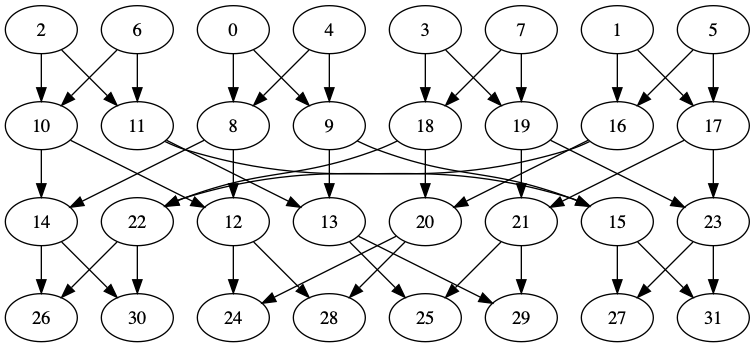}
    \endminipage
    \vspace{4pt}
    \\
    \minipage{0.35\textwidth}
    \includegraphics[width=\linewidth]{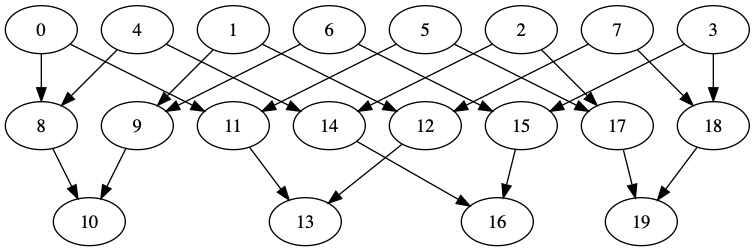}
    \endminipage
    \vspace{4pt}
    \\
    \minipage{0.45\textwidth}
    \includegraphics[width=\linewidth]{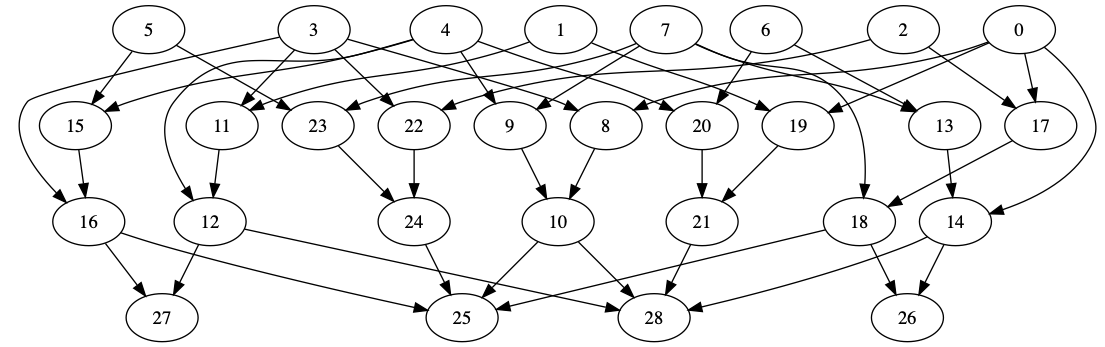}
    \endminipage
    \vspace{4pt}
    \\
    \minipage{0.4\textwidth}
    \includegraphics[width=\linewidth]{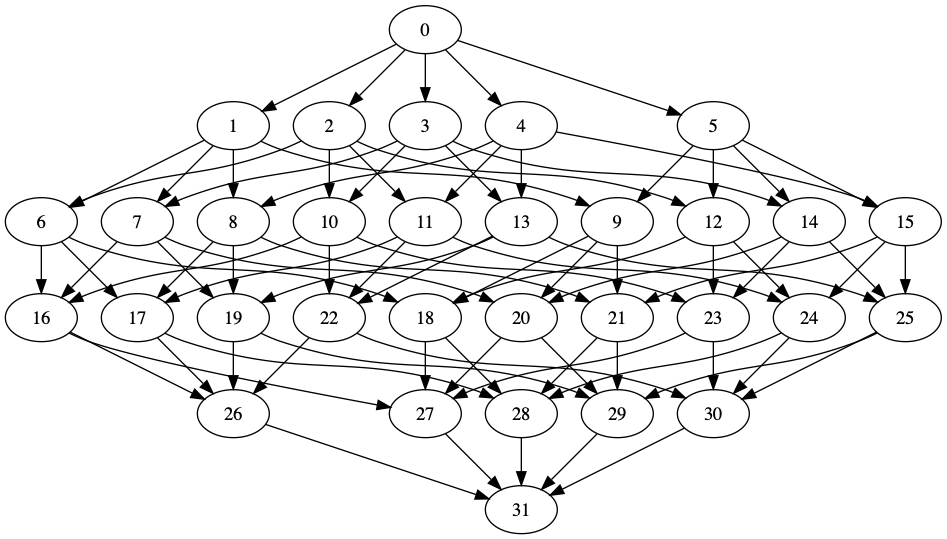}
    \endminipage
    \vspace{4pt}
    \\
    \caption{Computation Graphs for (top to bottom): 8 point FFT, $2 \times 2$ Naive Multiplication, $2 \times 2$ Strassen Multiplication, 5 city Bellman-Held-Karp algorithm for TSP}
    \label{graphexamples}
\end{figure}

\label{evaluation}
\subsection{Solver}
We evaluate our two lower bounds on four common computation graphs. To facilitate our evaluation, we develop a solver that traces operations during a Python computation and thus extracts a computation graph \footnote{Our code can be found at \url{https://github.com/stanford-futuredata/graphIO}}. The solver inter-operates with standard arithmetic operations and supports the inclusion of custom operations.

When computing Theorem \ref{laplacianbound}, any choice of $k < n$ produces a valid bound. We set $h=100$, computing up to the first 100 values of the graph Laplacian, and choose the optimal $k$ from $k \in \{2...h\}$. We discuss this choice in Section \ref{scalability}.

\subsection{Evaluation Computation Graphs}
\label{specification}
We evaluate the following graphs. Examples of these graphs can be found in Figure \ref{graphexamples}.

\begin{enumerate}
    \item \textit{Fast Fourier Transform (FFT):} We evaluate the $l$ level FFT of an $2^l$ element array, which is a butterfly graph. This graph has a published bound \cite{redbluepebble} of 
    $$\Omega \left(l 2^l/\log M\right).$$
    \item \textit{Naive Matrix Multiplication:} We evaluate the graph formed by naive multiplication of two $n\times n$ matrices $C = AB$. Specifically, we compute $C_{ij}$ as the dot product of the $i$th row of $A$ and the $j$th column of $B$. This graph has a published bound \cite{irony2004communication} 
    $$\Omega \left(n^3/\sqrt{M} \right).$$
    \item \textit{Strassen Multiplication} We evaluate the graph formed by multiplying to $n \times n$ matrices $C=AB$ via Strassen's method. Since Strassen's method is a recursive method that splits matrices into quadrants, we evaluate on values of $n$ that are powers of 2. This graph has a published bound \cite{ballard2012graph} of 
    $$\Omega \left(\left(n/\sqrt{M}\right)^{\log_2 7} M\right).$$
    \item \textit{Bellman-Held-Karp} We evaluate the hypercube computation graph formed by performing the Bellman-Held-Karp algorithm for a $l$ city TSP. We could not find a prior I/O bound for this problem in the current literature. However, in Section \ref{hypercube_closedform} we derive using the spectral method a bound of:
    $$\Omega\left(\left(2^l/l\right) - 2Ml\right).$$
\end{enumerate}

\begin{figure}[t]
    \includegraphics[width=0.5\textwidth]{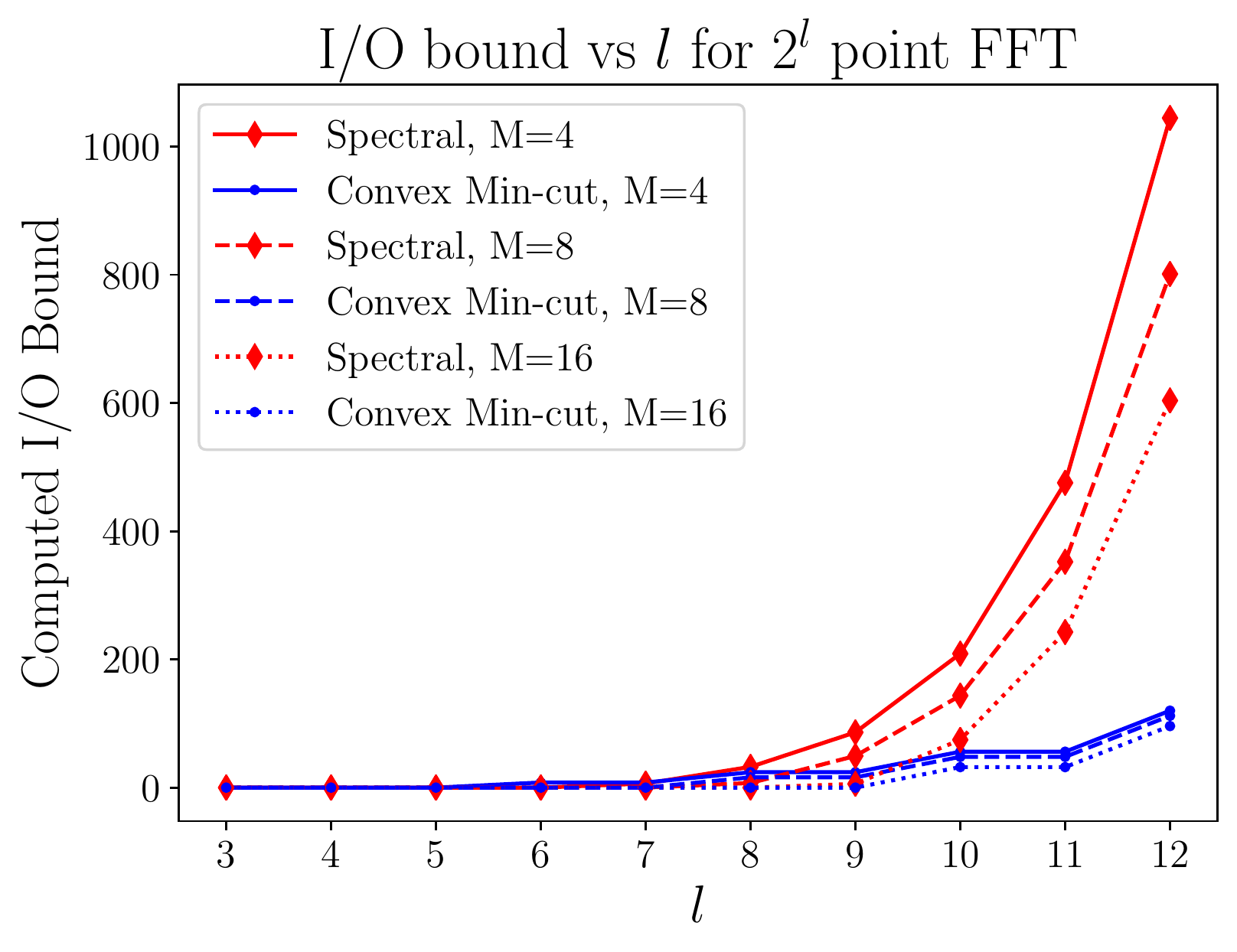}
    \\
    \includegraphics[width=0.5\textwidth]{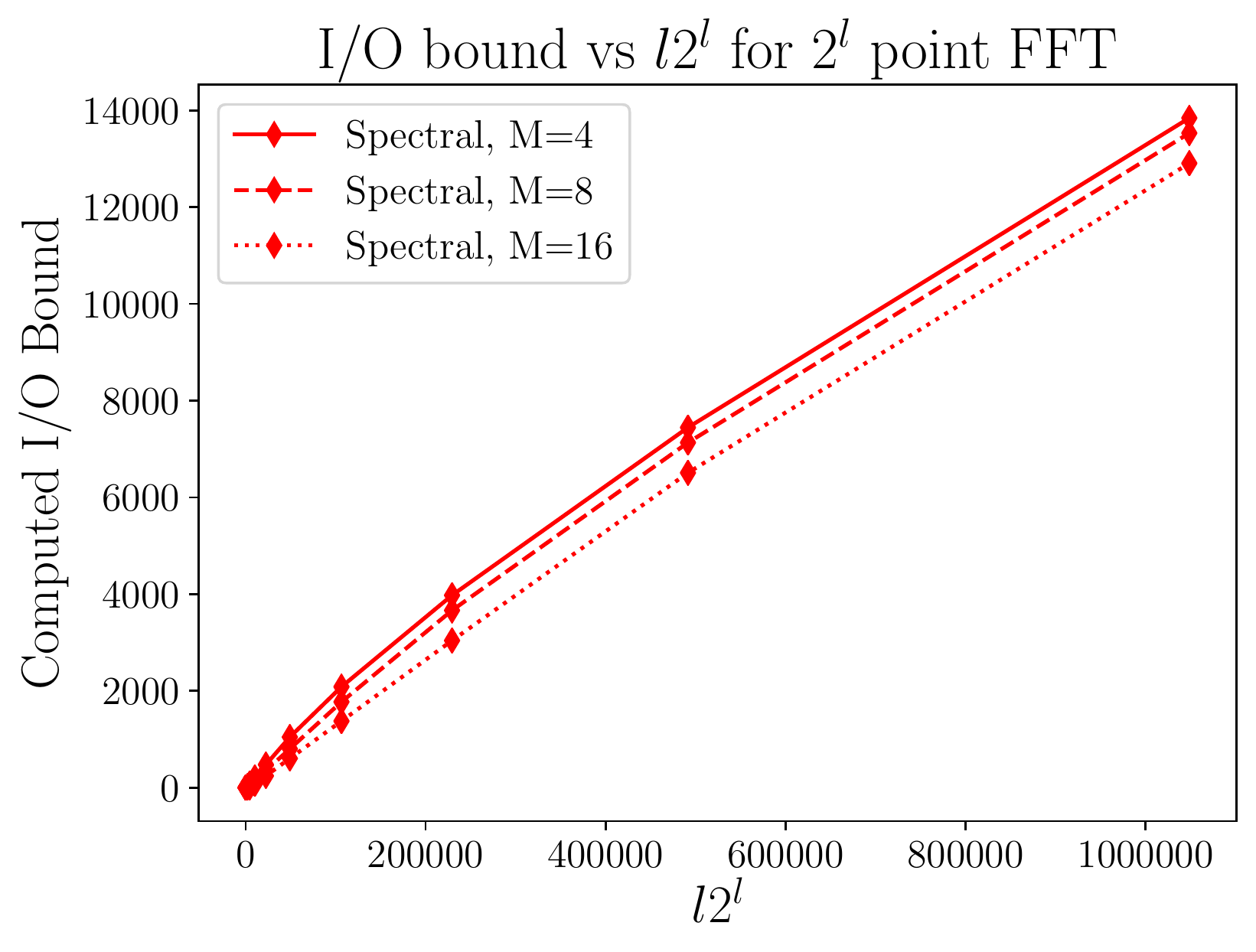}
    \caption{FFT: Bound vs $l$ (top) and $l 2^l$ (bottom) for $M=4, 8, 16$; $l$ = FFT Level. Max in-degree 2}
    \label{fftbounds}
\end{figure}

\begin{figure}[t]
    \includegraphics[width=0.5\textwidth]{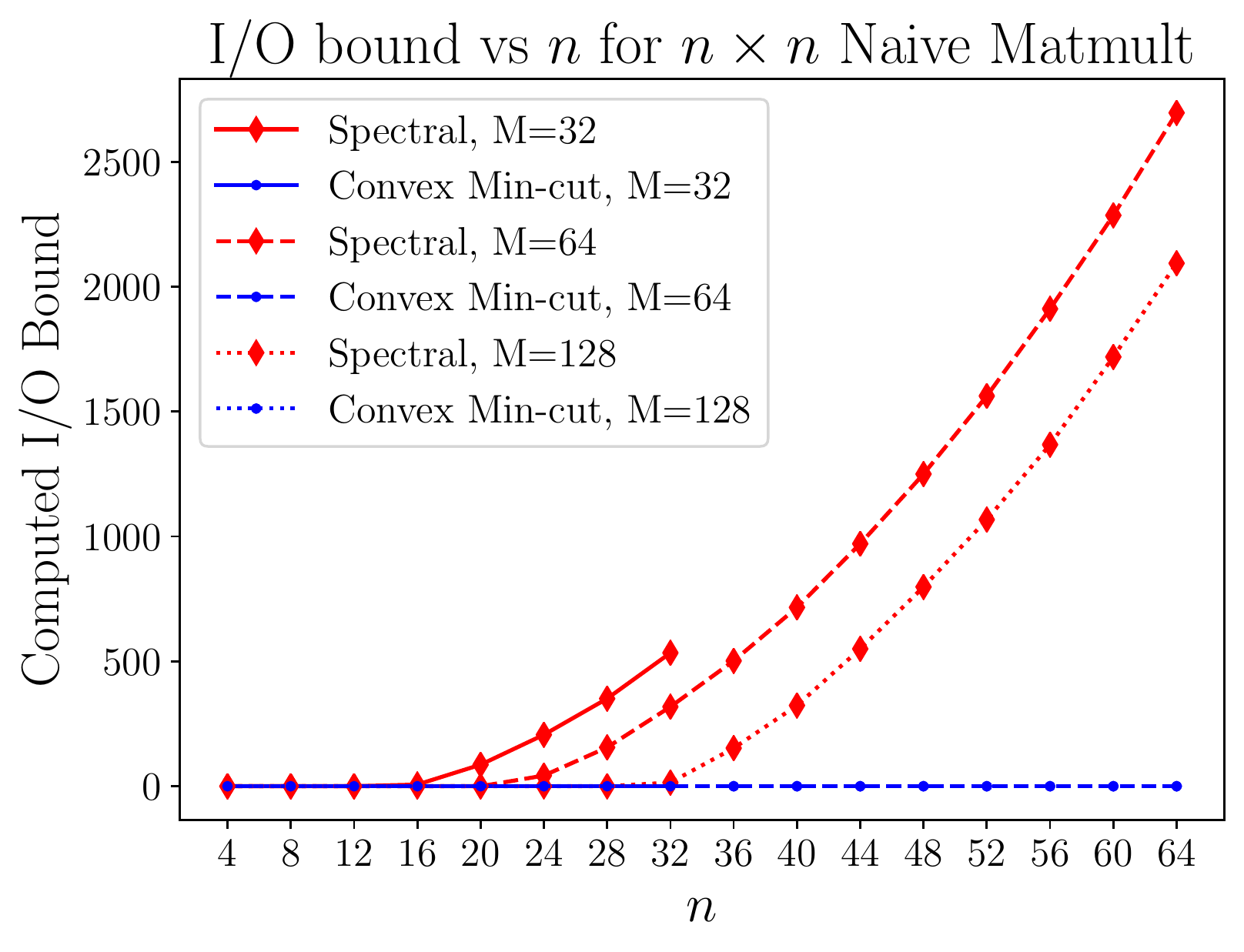}
    \hfill
    \includegraphics[width=0.5\textwidth]{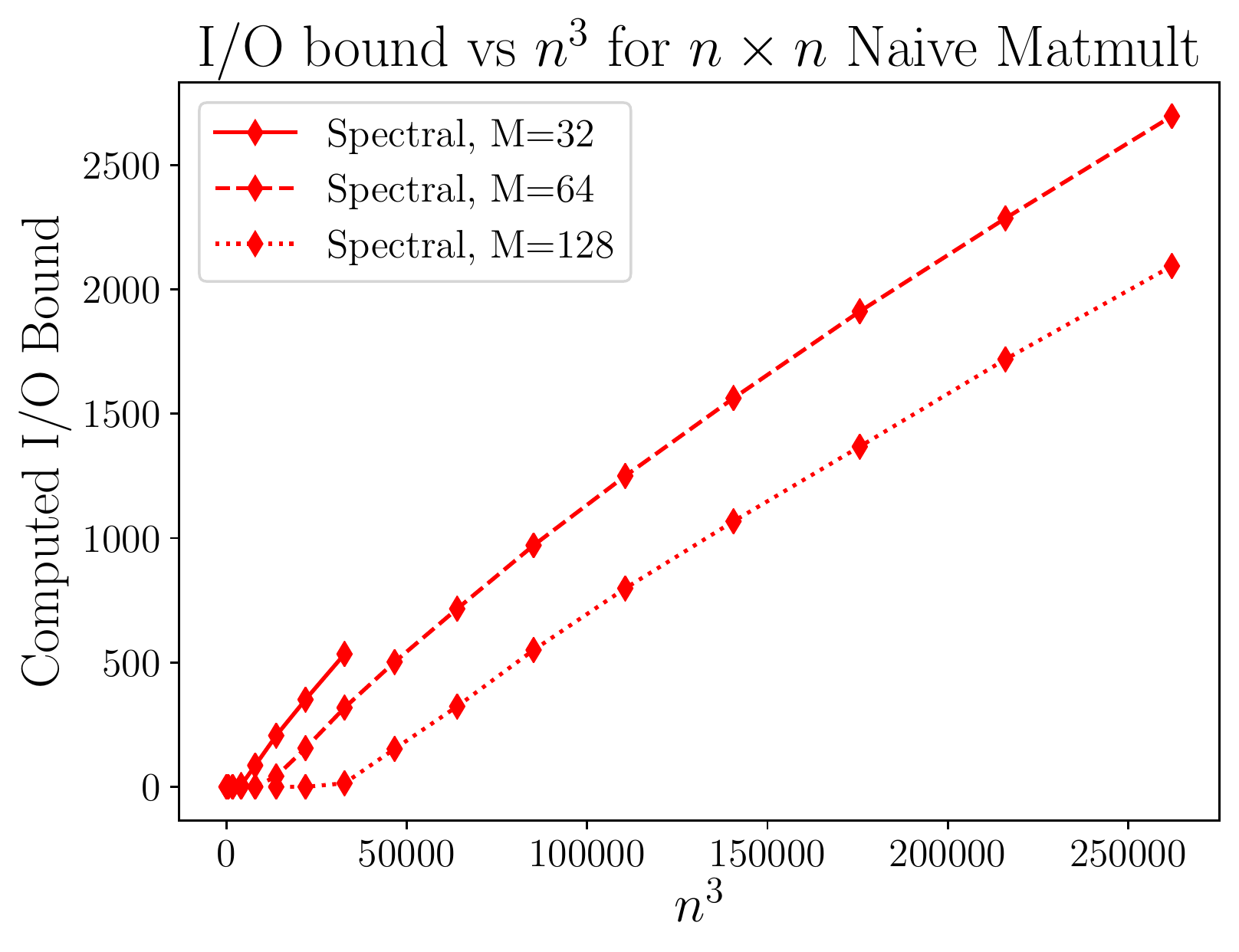}
    \caption{Naive matrix multiplication: Bound vs $n$ (top) and $n^3$ (bottom) for $M=[32, 64, 128]$; $n$ = side length. Max in-degree $n$.}
    \vspace{-5mm}
    \label{matmultbounds}
\end{figure}

\begin{figure}[t]
    \includegraphics[width=0.5\textwidth]{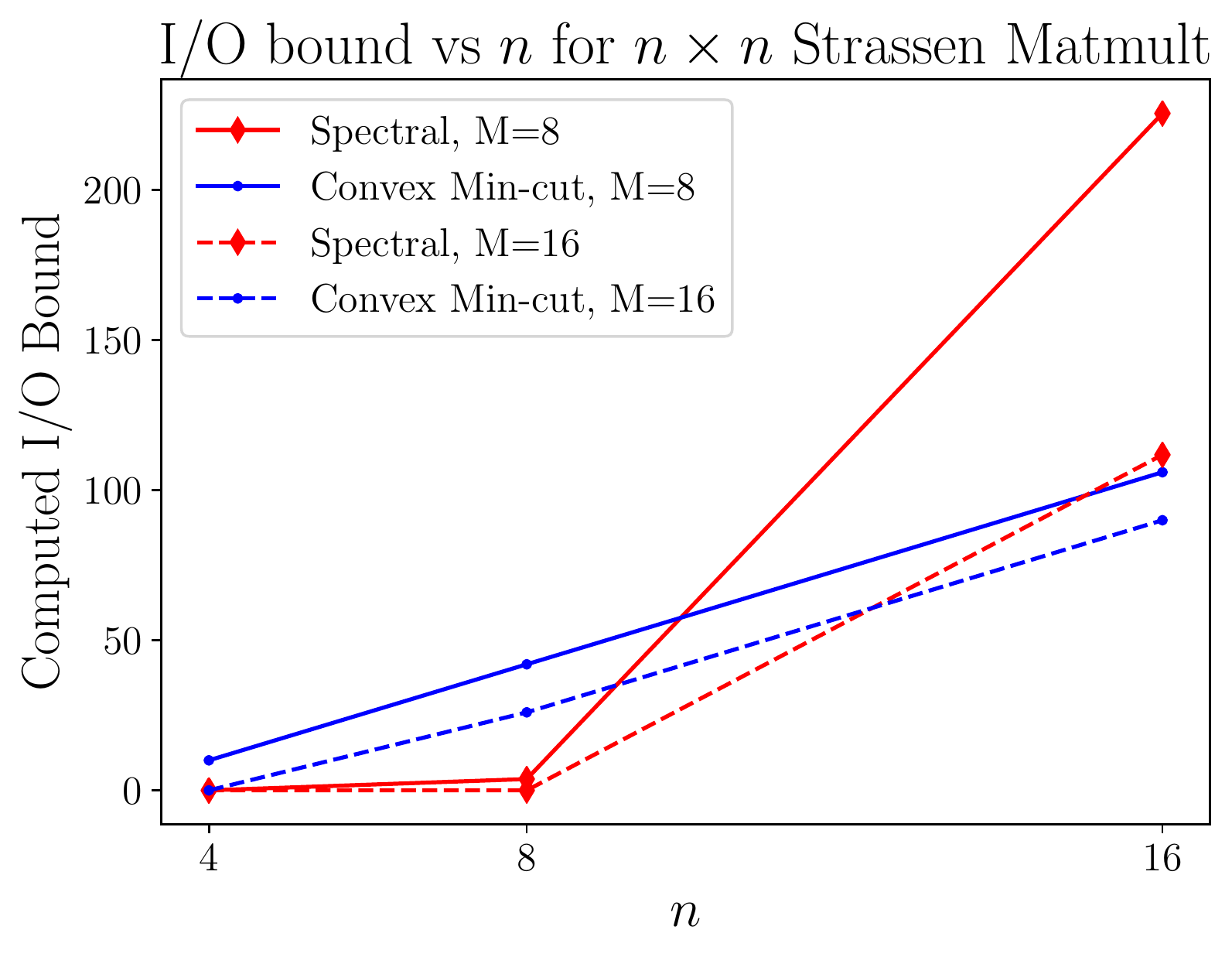}
    \hfill
    \includegraphics[width=0.5\textwidth]{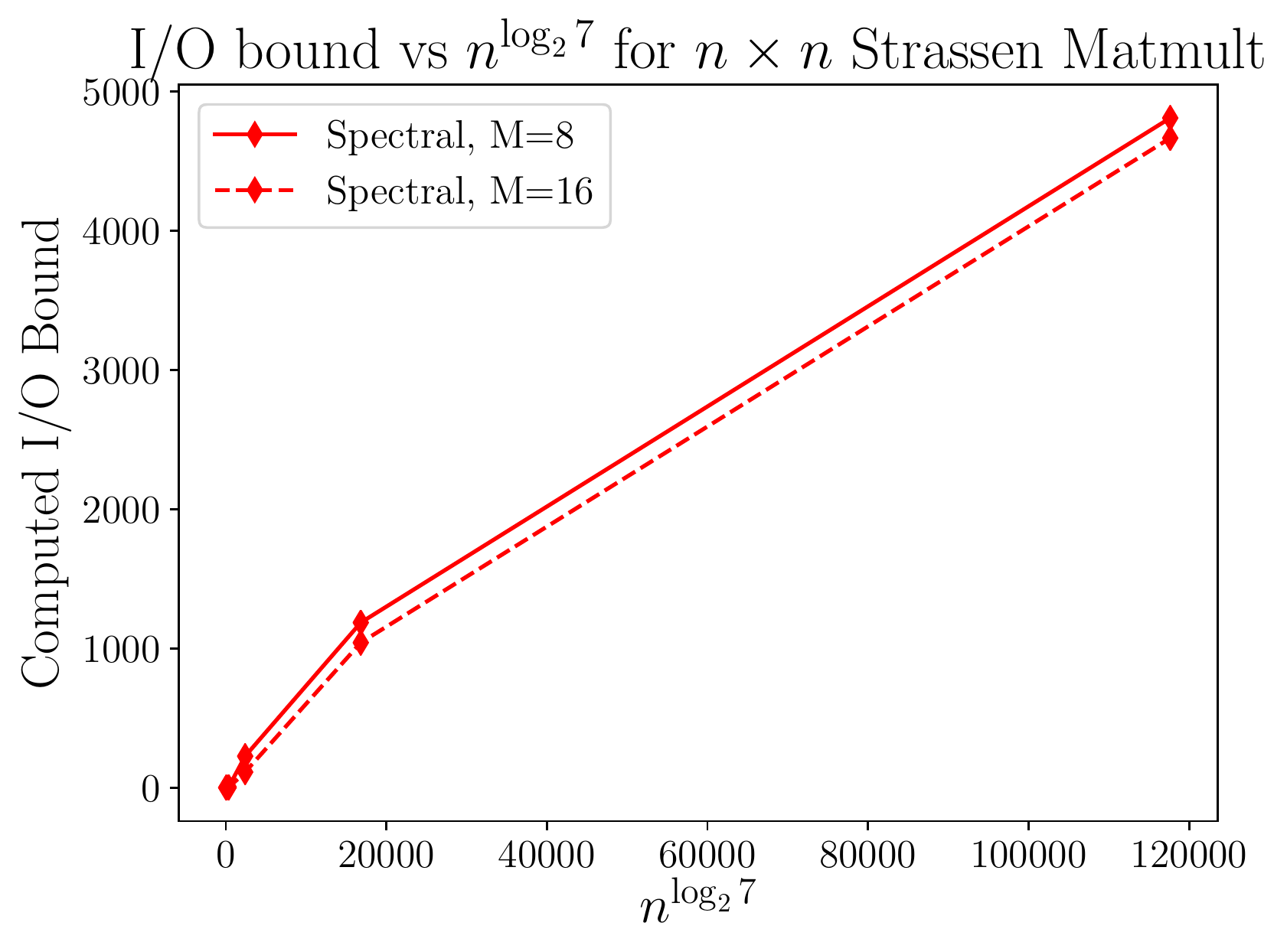}
    \caption{Strassen: Bound vs $n$ (top) and $n^{\log_2 7}$ (bottom) for $M=[8, 16]$; $n$ = side length. Max in-degree 4.}
    \label{strassenbounds}
\end{figure}

\begin{figure}[t]
    \includegraphics[width=0.5\textwidth]{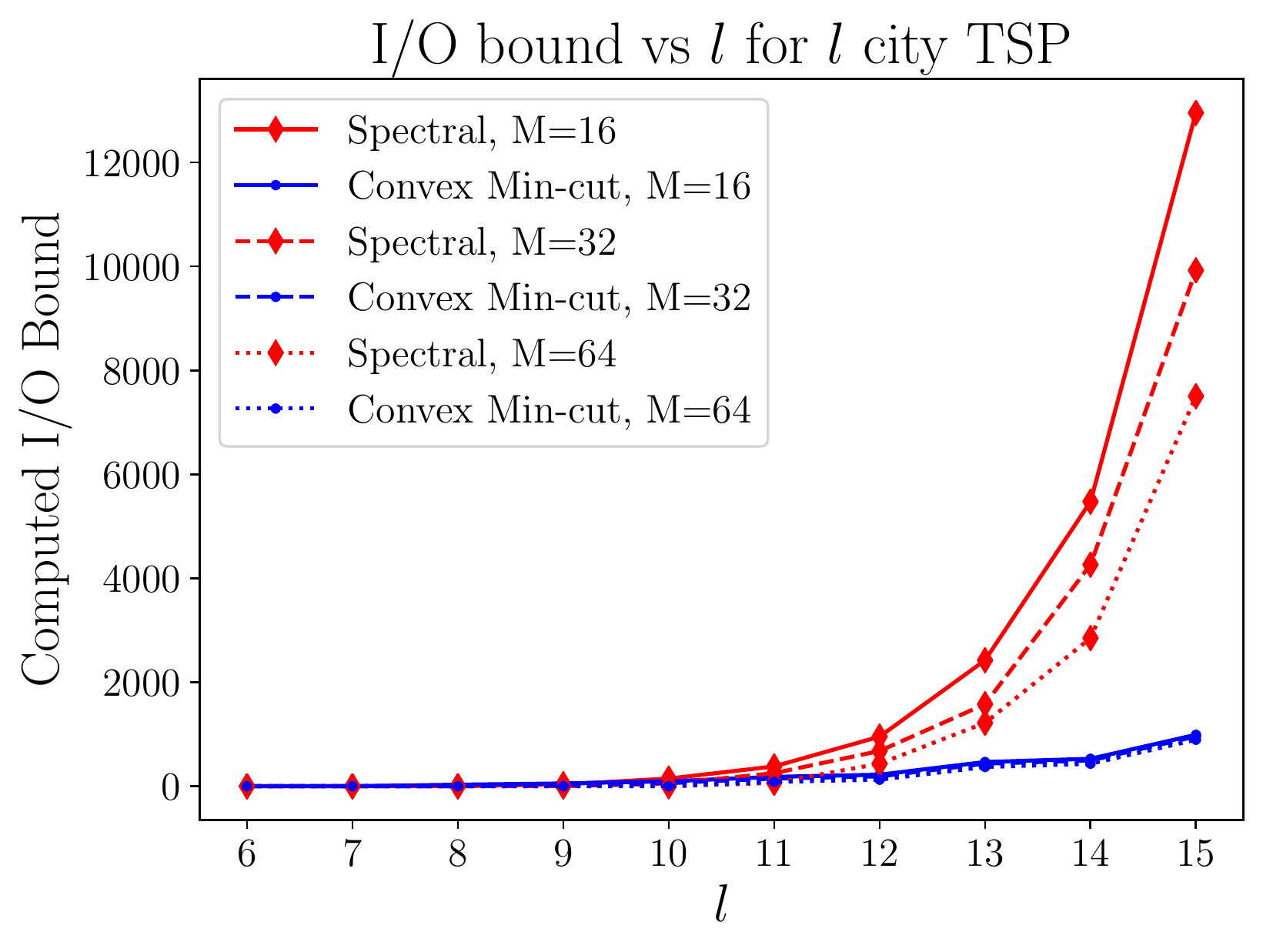}
    \\
    \includegraphics[width=0.5\textwidth]{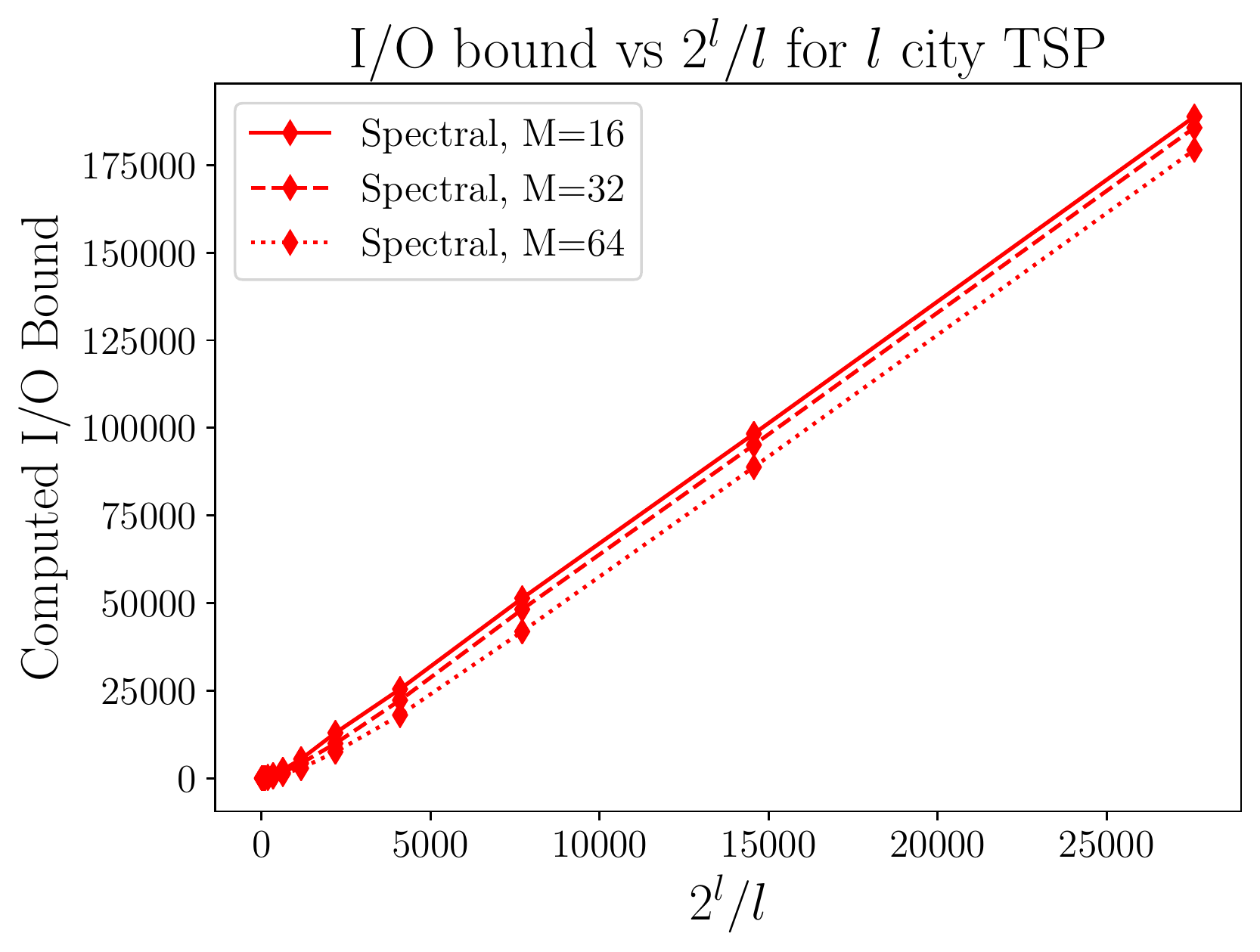}
    \caption{Bellman-Held-Karp for TSP: Bound vs $l$ (top) and $2^l/l$ (bottom) for $M=16, 32, 64$; $l$ = number of cities. Max in-degree $l$}
    \label{tspbounds}
\end{figure}

\subsection{Baselines} 
The only current methods for creating automatic lower bounds for any arbitrary graph that we could find are the convex min-cut method \cite{elangostcut} and the $2S$ partition method \cite{elangodissertation} (see Section \ref{relatedwork}). The $2S$ partition method involves solving a Mixed Integer Program, and is combinatorial in complexity: we could thus not perform this method for large graphs. The convex min-cut method is polynomial in time but still extremely expensive at $O(n^5)$. We evaluate the convex min-cut method for as large graphs as possible, cutting off evaluation at 1 day.

\textbf{Convex-Min Cut:} The convex min-cut method transforms the graph with respect to a vertex $v$ into a flow problem and then finds the minimum s-t cut of the transformed graph. The method maximizes over all $v$ in the graph. The method decomposes trivial (reading inputs and writing outputs) and non-trivial I/O, and thus fits well with our problem set-up. The runtime of this bound is $O(n^5)$ where $n$ is the number of vertices in the graph. In order to reduce runtime, the authors suggest loosening the bound by partitioning the graph into smaller sub-graphs using METIS and running convex min-cut on each sub-graph. If $C(v, G)$ is the minimum convex cut for $G$ transformed with respect to $v$, and $\mathcal{P}$ is the partition reported by METIS, the authors report the bound:
$$J_G^* \geq \sum_{P \in \mathcal{P}} \max_{v \in P} \max(0, 2*(C(v, G)-M)).$$

More details can be found in their paper. The authors suggest that each sub-graph have at most $2*M$ vertices, and evaluate their bounds on a series of small, simple computation graphs with very uniform structure. However, when evaluating on more complex computation graphs such as matrix multiplication or FFT, we found that the above bound gave trivial results ($J^* \geq 0$) for every one of the  graphs. We hypothesize that the prescribed sub-graph size of $2*M$ is too small for more complex graphs . In our evaluation, we display results of the convex min-cut method run over the entire graph (without partitioning):
$$J_G^* \geq \max_{v \in V} \max(0, 2*(C(v,G) - M)).$$
The above bound is linear in $M$ for any graph. In the worst case, this bound can take $O(n^5)$ time to compute.


\subsection{Bound Behavior vs Graph Sizes}
\label{varyinggraphsize}

We examine graph behavior for varying graph sizes and varying $M$. We plot the the spectral method and the convex min-cut method against the graph size parameter ($l$ for the $2^l$ FFT, the matrix side length $n$ for matrix multiplication, and the number of cities for the TSP).  

To compare against the published bounds, we also plot the computed I/O for the spectral bound against the graph parameter term in the analytical bounds in Section \ref{specification}. For example, we plot the computed I/O for the $2^l$ point FFT graph against $l 2^l$.  If our bounds follow the growth patterns of the analytical bounds, then these plots should be roughly linear. We do not display points where the maximum in-degree is greater than $M$, because then the computation of some operations would not fit all their operands inside fast memory.

For all four graphs, we find that the bound computed from the spectral method is both tighter and more scalable than the convex min-cut method. In particular, the convex min-cut method is trivial for the naive matrix multiplication graph. 

Moreover, we find that our bounds roughly match the analytical growth of the published bounds, as the I/O vs the published bound is roughly linear for all four graphs. Finally, we note that our bound does not significantly degrade with $M$, and can thus be computed for large memory sizes.

\subsection{Scalability}
\label{scalability}

\begin{figure}[t]
    \includegraphics[width=0.5\textwidth]{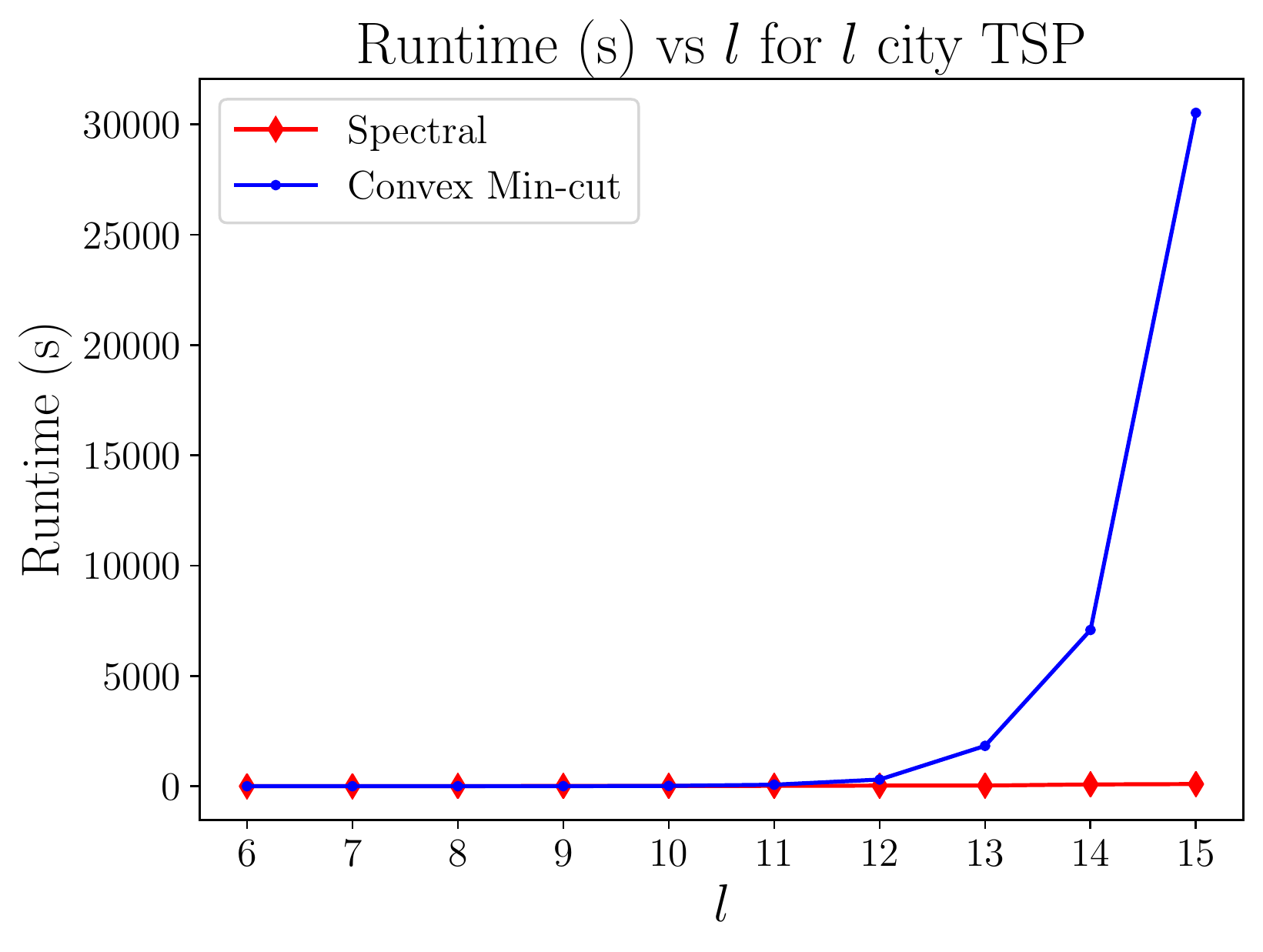}
    \caption{Runtime in seconds for computing the lower bound for Bellman-Held-Karp on a $l$ city TSP for various $l$.}
    \label{scalability_plot}
\end{figure}

Our spectral method is fast to compute, with a runtime complexity of $O(hn^2)$ where $h \leq n$ is the number of eigenvalues computed. Since any number of partitions $k$ in Theorem \ref{laplacianbound} gives a valid lower bound, $h$ can be set to trade off the bound strength with runtime complexity, with a maximum runtime of $O(n^3)$ by computing all the eigenvalues of the graph Laplacian. However, empirically we found that even when computing all of the eigenvalues, the best $k$ is usually far below $100$ even for large graphs, so the higher level eigenvalues remain unused: we therefore can set $h=100$ without losing bound strength. In contrast, the convex min-cut method has runtime complexity of $O(n^5)$, which does not scale well to large graphs. In Figure \ref{scalability_plot}, we plot the runtime in seconds of computing the convex min-cut and the spectral method for successively larger $l$ for evaluating Bellman-Held-Karp on an $l$ city TSP. We find that the convex-min-cut runtime explodes, taking close to 8.5 hours for the 15 city TSP, while our spectral method takes 98 seconds.

\section{Conclusion}
Finding I/O bounds for general computations remains a challenging problem. In this paper, we propose a novel method to find I/O bounds for computation graphs, using the spectra of the graph Laplacian. The spectra can be computed efficiently even for large graphs and can also be computed in closed form for some graphs, yielding a proof technique to find new closed-form bounds. We used the spectral method to derive closed-form bounds for several graphs, including the hypercube for the Bellman-Held-Karp algorithm and the butterfly graph for the Fast Fourier Transform. We evaluated our method empirically on four computation graphs and showed that it finds tighter bounds than previous automated methods at a fraction of the runtime and behaves similarly to published analytical bounds.

%
\begin{acks}
We thank Pratiksha Thaker, Moses Charikar, and Guillermo Angeris for their advice and feedback on this work. 

This research was supported in part by affiliate members and other supporters of the Stanford DAWN project---Ant Financial, Facebook, Google, Infosys, NEC, and VMware---as well as Cisco, SAP, and the NSF under CAREER grant CNS-1651570. Any opinions, findings, and conclusions or recommendations expressed in this material are those of the authors and do not necessarily reflect the views of the National Science Foundation. 

\end{acks}

%
\bibliographystyle{ACM-Reference-Format}
\bibliography{sources.bib}

\appendix

\section{Spectra of Butterfly Graphs}
\label{ffteigappendix}

We derive the Laplacian spectra of the $k$-level unwrapped butterfly graph $B_k$. $B_k$ has $(k+1)2^k$ vertices, which can be arranged as $k+1$ columns of $2^k$ nodes. Let $B_0$ be a single vertex. We form $B_k$ by connecting two disjoint copies of $B_{k-1}$.  Let $V_{i,j}$ be the $j$th vertex of the $i$th column of the first copy, and let $V'_{i,j}$ be similarly defined for the second copy. Via a new column of $2^k$ vertices (split into $V_{k+1, j}$ and $V'_{k+1, j}$ for $j=1,...,2^{k-1}$), we connect the two copies by adding edges $(V_{k, i}, V_{k+1,i})$, $(V'_{k, i}, V'_{k+1,i})$, $(V_{k, i}, V'_{k+1,i})$, and $(V'_{k, i}, V_{k+1,i})$. Examples of this labelling scheme are displayed in Figure \ref{fft_general}.

\begin{theorem}[Laplacian spectra of the butterfly graph]
    The graph Laplacian of the butterfly graph $B_k$ has eigenvalues:
    \begin{itemize}
        \itemsep0em 
        \item Repeated once:
        $4 - 4 \cos \left( \frac{\pi j}{k} \right), j=0,...,k$
        \item For $i=1,...,k$, repeated $2^{k-i+1}$ times:
        $4 - 4 \cos \left( \frac{\pi (2j+1)}{2i+1} \right), j=0,...,i-1$
        \item For $i=1, ..., k-1$, repeated $(k-i)2^{k-i-1}$ times:
        $4 - 4 \cos\left(\frac{j\pi}{i+1}\right), j=1,...,i$
    \end{itemize}
    \label{ffteigs}
\end{theorem}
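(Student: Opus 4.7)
My plan is to exploit the translation symmetry of $B_k$ in the row coordinate to block-diagonalize the Laplacian and then analyze each block as a small tridiagonal matrix whose spectrum is classical. First I would fix an explicit model: identify $V(B_k)$ with $\{0,1,\ldots,k\} \times \{0,1\}^k$ and take edges $(c,x) \sim (c+1,x)$ and $(c,x) \sim (c+1, x \oplus e_{c+1})$ for $0 \le c < k$, where $e_j$ is the $j$th standard basis vector. A direct check shows this matches the recursive construction of $B_k$, and the group $(\mathbb{Z}/2)^k$ then acts by graph automorphisms $\tau_y:(c,x)\mapsto (c, x \oplus y)$, so $\mathbb{R}^{V(B_k)}$ decomposes into $L$-invariant isotypic components, one per character $\chi_S:x \mapsto (-1)^{\sum_{i\in S} x_i}$ of $(\mathbb{Z}/2)^k$.

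On the $(k+1)$-dimensional component spanned by vectors $(c,x) \mapsto f(c)\chi_S(x)$, the identity $\chi_S(x) + \chi_S(x \oplus e_c) = 2\,\mathbf{1}[c \notin S]\,\chi_S(x)$ shows that $L$ acts as a tridiagonal matrix $T_S$ on $\mathbb{R}^{k+1}$ with diagonal $2$ at $c=0,k$, diagonal $4$ in between, and off-diagonals $(T_S)_{c-1,c} = -2\,\mathbf{1}[c \notin S]$. The crucial observation is that whenever $c \in S$ this off-diagonal vanishes, so $T_S$ block-decomposes at every cut point; for $S = \{s_1 < \cdots < s_r\}$ the sub-blocks are exactly the intervals $[0,s_1-1], [s_1,s_2-1], \ldots, [s_r,k]$. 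Each sub-block is itself a tridiagonal matrix with off-diagonal $-2$ whose boundary conditions are encoded in its diagonal: an end in contact with the actual boundary of $B_k$ has diagonal $2$ (Neumann), while an end sitting at a cut inherits the interior diagonal $4$ (a Dirichlet $+2$ increment over Neumann). I would then invoke the classical spectra of these decorated paths: Neumann--Dirichlet on $n$ vertices has eigenvalues $4-4\cos(\pi(2j+1)/(2n+1))$ for $j=0,\ldots,n-1$; Dirichlet--Dirichlet on $n$ vertices has eigenvalues $4-4\cos(\pi j/(n+1))$ for $j=1,\ldots,n$; and the single unbroken block that appears when $S=\emptyset$ (pure Neumann--Neumann of length $k+1$) has eigenvalues $4-4\cos(\pi j/(k+1))$ for $j=0,\ldots,k$. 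Setting $n=i$ reproduces the three eigenvalue families in the statement.

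The final step is the multiplicity count. A Neumann--Dirichlet sub-block of size $i$ arises either as a left-end block $[0,s_1-1]$ with $\min S = i$ (giving $2^{k-i}$ characters, since the remaining elements of $S$ lie freely in $\{i+1,\ldots,k\}$) or as a right-end block $[s_r,k]$ with $\max S = k-i+1$ (another $2^{k-i}$), yielding $2^{k-i+1}$ in total. A Dirichlet--Dirichlet middle block of size $i$ corresponds to two consecutive cuts at $s_q$ and $s_q + i$ with $s_q$ free in $\{1,\ldots,k-i\}$ and the remaining $k-i-1$ potential cut locations chosen freely, giving $(k-i)\,2^{k-i-1}$ such blocks. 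I expect the main delicacy to be this bookkeeping --- in particular, verifying that characters such as $S = \{i, k-i+1\}$, which produce sub-blocks at both ends simultaneously, are counted correctly by enumerating sub-blocks rather than characters --- once that is handled, the three multiplicity claims drop out.
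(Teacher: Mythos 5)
Your proposal is correct, and it arrives at exactly the same reduction as the paper's proof --- the Laplacian restricted to decorated paths indexed by binary strings, followed by the same block-size census --- but it gets there by a genuinely different mechanism. The paper peels off the $\mathbb{Z}/2$ symmetry one level at a time: it defines vertex-and-edge-weighted ``augmented butterfly graphs'' $B_k(q)$, proves a two-lift identity $\lambda(L(B_k(q))) = \lambda(C_1+C_2)\uplus\lambda(C_1-C_2)$ for each split, and inducts down to the flat graphs $K(q)$. You instead diagonalize the full $(\mathbb{Z}/2)^k$ action in one shot via its characters $\chi_S$; your tridiagonal matrices $T_S$ are precisely the paper's $L(K(q))$ under the dictionary $S=\{i: q_i=0\}$ (the paper's ``delete the edge and add $2$ to the endpoint weights'' is exactly your observation that zeroing an off-diagonal of $T_S$ leaves the degree-$4$ diagonal intact, i.e.\ a Dirichlet end). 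Your route buys a cleaner argument --- no auxiliary vertex-weighted graph gadget, no induction, and the Neumann/Dirichlet boundary-condition language makes the three path types and their spectra transparent --- at the cost of having to set up the explicit coordinate model and verify the character identity, which you do correctly. Your multiplicity bookkeeping (counting sub-blocks rather than characters, so that an $S$ contributing both a left-end and a right-end block of the same size is counted twice) is also right and matches the paper's counts for $P'_i$ and $P''_i$.

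One discrepancy worth flagging: your first family is $4-4\cos\bigl(\pi j/(k+1)\bigr)$, $j=0,\dots,k$, whereas the theorem as printed has denominator $k$. Your version is the correct one --- it is what the Neumann path on $k+1$ vertices gives, it agrees with the formula quoted in the paper's main text, and it is confirmed by the sanity check $B_1\cong C_4$ with spectrum $\{0,2,2,4\}$ (the printed formula would yield an eigenvalue $8$). The theorem statement contains a typo, not your proof.
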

\begin{proof}

    We decompose the butterfly graph into smaller, weighted graphs. We allow weights to exist both on edges and vertices via the weight functions $\omega: V \times V \rightarrow \bbR$ and $\phi: V \rightarrow \bbR$ respectively. If unmarked, edges have weight $1$ and vertices weight $0$. Then the Laplacian $L(G)$ of a vertex/edge weighted graph $G=(V,E)$ is
    \begin{align*}
        L_{ij}(G) = \begin{cases}
            \phi(v_i) + \sum_{(v_i, v_j) \in E} \omega(v_i, v_j) & i=j\\
            -\omega(v_i, v_j) & i\neq j
        \end{cases}
    \end{align*}   

    Let $\uplus$ indicate the multiset union. We define the following \textit{augmented butterfly graph} $B_k(q)$ for binary string $q \in \{0,1\}^m$ with $m \leq k$. $B_k(q)$ has $k+1$ columns of $2^{k-m}$ nodes. The first $k-m+1$ columns are the butterfly graph $B_{k-m}$. The next $m$ columns are connected as follows: for $i \in \{1,...,m\}$, if $q_i = 1$, create edges of weight $2$ $(V_{k-m+i, j}, V_{k-m+i+1, j})$ and $(V'_{k-m+i, j}, V'_{k-m+i+1, j})$ for $j \in \{1,...,2^{k-m-1}\}$. Otherwise if $q_i=0$, then add $2$ to the vertex weights of $V_{k-m+i, j}$, $V_{k-m+i+1, j}$, $V'_{k-m+i, j}$ and $V'_{k-m+i+1, j}$. Note $B_k([]) = B_k$, and augmentations of $B_k$ maintain vertex degrees. Under our formulation, $B_k([1])$ is equivalent to the augmented butterfly graph from \cite{avior1998tight}.  We prove the following decomposition lemma:
    \begin{lemma} Let $B_k(q)$ be an augmented butterfly graph with $|q| < k$. Then:
    $$\lambda(L(B_k(q))) = \lambda(L(B_k([1] + q))) \uplus \lambda(L(B_k([0] + q))).$$
    \vspace{-5mm}
    \label{decomp}
    \end{lemma}
    \begin{proof}
        Let $V_1$ be the vertices $V_{i,j}$ and $V_2$ be the vertices $V'_{i,j}$ for $i \in \{1,...,k+1\}$, $j \in \{1,...,2^{k-m-1}\}$ in $B_k(q)$. In Figure \ref{fft_augmented}, $V_1$ is the top half and $V_2$ is the bottom half of the vertices in each graph. The sub-graphs induced by $V_1$ and $V_2$ are identical. Let $D', A'$ be the degree and adjacency matrices of the sub-graph, and let $\hat{A}$ be the adjacency matrix between $V_1$ and $V_2$.  

        Due to the symmetry of the augmented butterfly, we can decompose the graph Laplacian into quadrants as 
    $$L(B_k(q)) = \begin{bmatrix} C_1 & C_2 \\ C_2 & C_1 \end{bmatrix}$$
        where $C_1 = D' - A'$ and $C_2 = -\hat{A}$. Then $C_1 + C_2 = D' - (A' - \hat{A}) = L(B_k([1] + q))$ and $C_1 - C_2 = D' - (A' - \hat{A}) = L(B_k([0] + q))$. Noting that $\lambda(L(B_k(q))) = \lambda(C_1 + C_2) \uplus \lambda(C_1 - C_2)$ \cite{bilu2006lifts} completes the proof of this lemma.
    \end{proof}

    \begin{figure}[t]
        \includegraphics[width=0.4\textwidth]{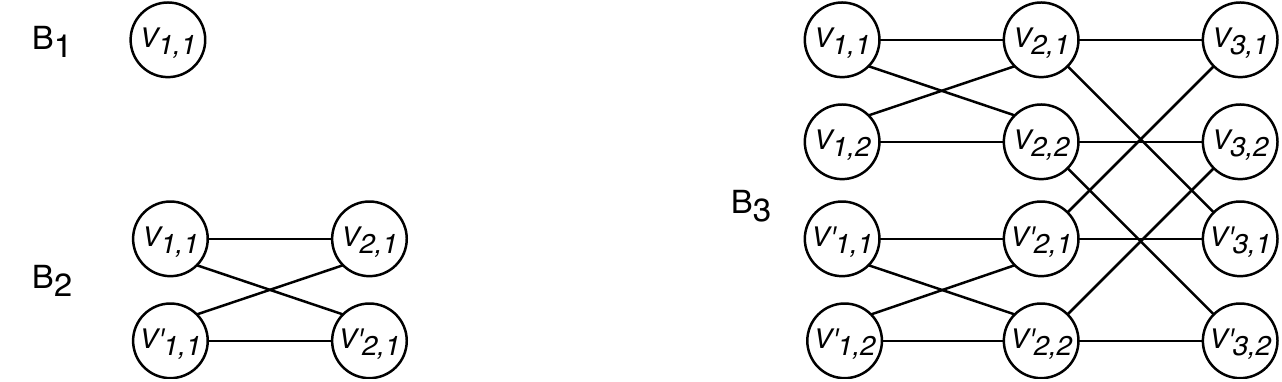}
        \caption{Three iterations of the unwrapped butterfly graph.}
        \vspace{-5mm}
        \label{fft_general}
    \end{figure}
    
    \begin{figure}
            \includegraphics[width=0.4\textwidth]{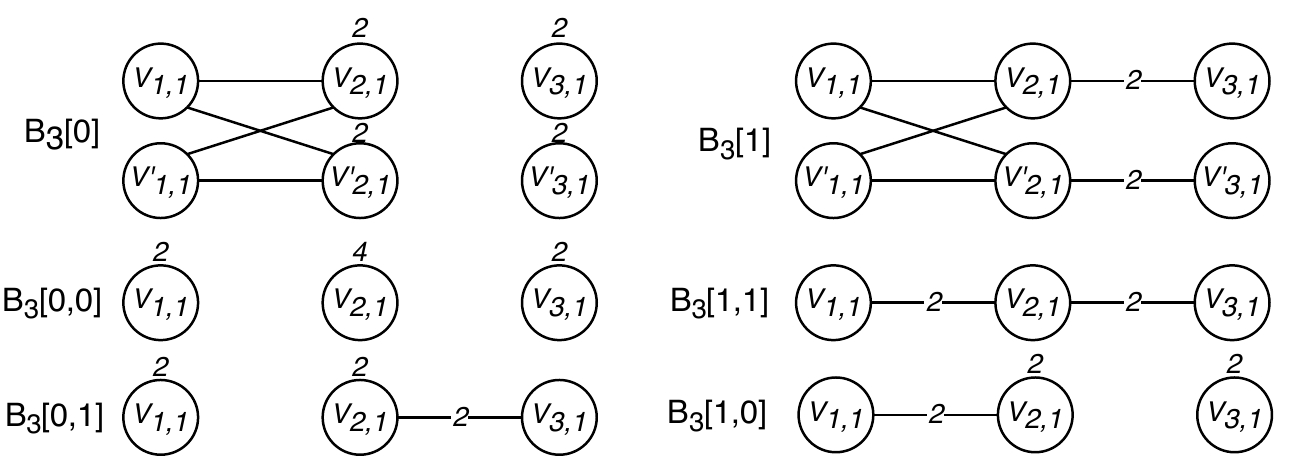}
        \caption{Examples of augmented butterfly graphs created from $B_3$.}
        \label{fft_augmented}
        \vspace{-5mm}
    \end{figure} 
    
    We inductively apply this lemma to decompose $B_k$ into a series of flat graphs. Define the graph $K(q)$ for $q \in \{0,1\}^k$ as follows: let $K(q)$ be the path graph of length $k+1$. Then for each $i \in \{1,...,k\}$, if $q_i = 0$,  delete the edge $(v_i, v_{i+1})$ and add 2 to the weights of the endpoints of that edge. Examples of $K(q)$ for $B_3$ can be found in Figure \ref{paths}. 

\begin{lemma}
    Let $Q = \{0,1\}^k$ be the set of all binary strings of size $k$. Then we have (counting multiplicity)
    $$\lambda(B_k) = \biguplus_{q \in Q} \lambda(L(K(q))).$$
\end{lemma}
\begin{proof}
    The proof follows from induction on Corollary \ref{decomp}. For permutation $q'$, let $Q = \{0,1\}^{k-|q'|}$. Then:
    $$\lambda(g(B_k(q'))) = \biguplus_{q \in Q} \lambda(g(B_k(q + q'))).$$
     For our base case, suppose that $|q'| = k$. Then $P$ is the emptyset, so the claim is trivially true. For our inductive step, assume that the claim is true for all $q'$ so that $|q'| = m$ (assuming $m>0$). Then we prove that the claim is true for $q'$ so that $|q'| = m-1$. Note that according to Corollary \ref{decomp}, 
    $$\lambda(g(B_k(q'))) = \lambda(g(B_k([0] + q')))  \uplus \lambda(g(B_k([1] + q'))).$$
    Let $Q' = \{0,1\}^{k-|q'|-1}$.
    But by the inductive hypothesis, we have that:
    $ \lambda(g(B_k([0] + q'))) = \biguplus_{q \in Q'} \lambda(g(B_k(q + [0] + q')))$ and
    $\lambda(g(B_k([1] + q'))) = \biguplus_{q \in Q'} \lambda(g(B_k(q + [1] + q'))).$
    Then combining the two together we get 
    $\lambda(g(B_k(q'))) = \biguplus_{q \in Q} \lambda(g(B_k(q + q'))).$
    Thus our inductive claim holds. $q' = []$ is a special case of this claim that completes the proof.
\end{proof}

Each $K(q)$ consists of disconnected path graphs. Let $P_i$ be the path graph with $i$ vertices with edge weights 2. $P_i'$ has one end vertex with weight $2$, and path $P_i''$ has both end vertices with weight 2. Examples of these paths can be found in \ref{path_types}.

\begin{figure}[t]
    \includegraphics[width=0.5\textwidth]{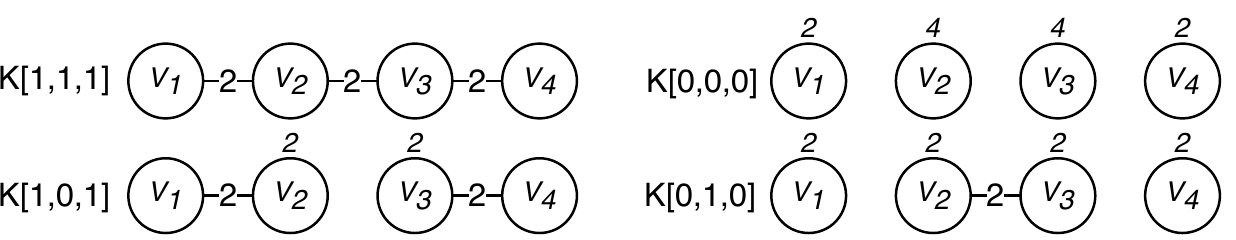}
    \caption{Examples of $K(q)$ for various $q$.}
    \vspace{-5mm}
    \label{paths}
\end{figure}

\begin{figure}
    \includegraphics[width=0.5\textwidth]{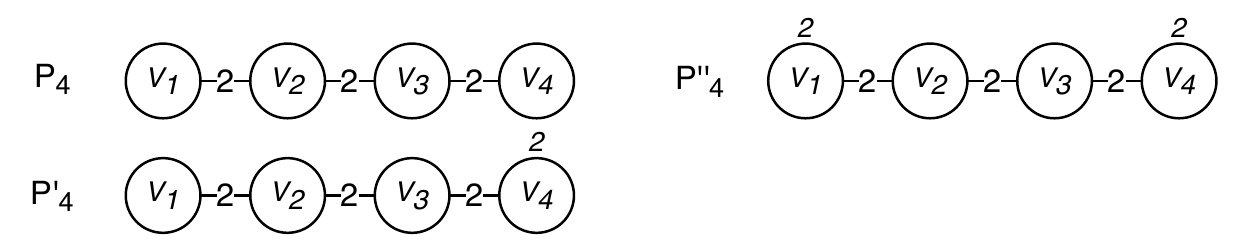}
    \caption{Three types of paths: $P_4$, $P'_4$, $P''_4$ in $\mathcal{K}$ for $B_4$.}
    \vspace{-5mm}
\label{path_types}
\end{figure}

\begin{lemma}
    $$\lambda(L(B_k)) = \biguplus_{P \in \mathcal{P}} \lambda(L(P))$$
    where $\mathcal{P}$ consists of:
    \item[i.] A single instance of $P_{k+1}$
    \item[ii.] For $i=1,...,k$, $2^{k-i+1}$ instances of $P_i'$.
    \item[iii.] For $i=1,...,k-1$, $(k-i)2^{k-i-1}$ instances of $P_i''$
\end{lemma}
\begin{proof}
    Define $\mathcal{K}$ as the multiset of path components of each $K(q)$ built from all $q \in Q = \{0,1\}^k$:
    $\mathcal{K} = \biguplus_{q \in Q} K(q).$ We need to prove that $\mathcal{P} = \mathcal{K}$.

    \textbf{(i.)} The only instance of $P_{k+1}$ is created by $K(\{1\}^k).$

    \textbf{(ii.)} We first examine instances of $P'_{i}$ such that the weighted vertex is on the right. Since the edge immediately after the first $i-1$ edges must be deleted, there are $k-i$ edges to the left that can be either 0 or 1. Then there are $2^{k-i}$ paths of $P'_{i}$ that appear such that the weighted vertex is on the right. Since the number of paths with the weighted vertex as the left or the right is symmetric, there are a total of $2^{k-i+1}$ instances of $P'_{i}$.

    \textbf{iii.} We examine an instance of $P''_{i}$ that starts on a specific index. The edges on both the left and the right of the path must be deleted. Therefore, there are $k-i-1$ edges that are free to be either 0 or 1. There are $2^{k-i-1}$ instances of $P''_{i}$ which begin at a specific instance. $P''_{i}$ can begin at $k-i$ possible valid indices, resulting in $(k-i)2^{k-i-1
    }$ instances.

    Thus $\mathcal{K} = \mathcal{P}$. Lemma \ref{decomp} completes the proof.
\end{proof}

Thus, to finish our proof of Theorem \ref{ffteigs}, we simply need to find the eigenvalues of $P_i$, $P_i'$, and $P_i''$.

\begin{lemma}
\end{lemma}
The eigenvalues of $L(P_i)$, $L(P_i')$ and $L(P_i'')$ are:
\begin{itemize}
\item $\lambda(L(P_i)) = 4 - 4 \cos \left( \frac{\pi j}{i} \right), \forall j=0,...,i-1$
\item $\lambda(L(P_i)') = 4 - 4 \cos \left( \frac{\pi (2j+1)}{2i+1} \right), \forall j=0,...,i-1$
\item $\lambda(L(P_i'')) = 4 - 4 \cos(\frac{j\pi}{i+1}), \forall j=1,...,i.$
\end{itemize}
\begin{proof}
    The eigenvalues of the unweighted path $P_i$ have been well-studied in current literature and can be found in \cite{brouwer2011spectra, chung1997spectral}. We omit the proof here.

    We note that $L(P_i')$ is a tri-diagonal matrix: it has $4$ on the diagonal, $-2$ on the off-diagonals, and $2$ in the lower right corner. We show that $\lambda(L(P_i'))$ are the odd eigenvalues of $\lambda(L(P_{2i+1}))$. 
    We can decompose $L(P_{2i+1})$ as:

    \begin{center}
    \includegraphics[width=0.3\textwidth]{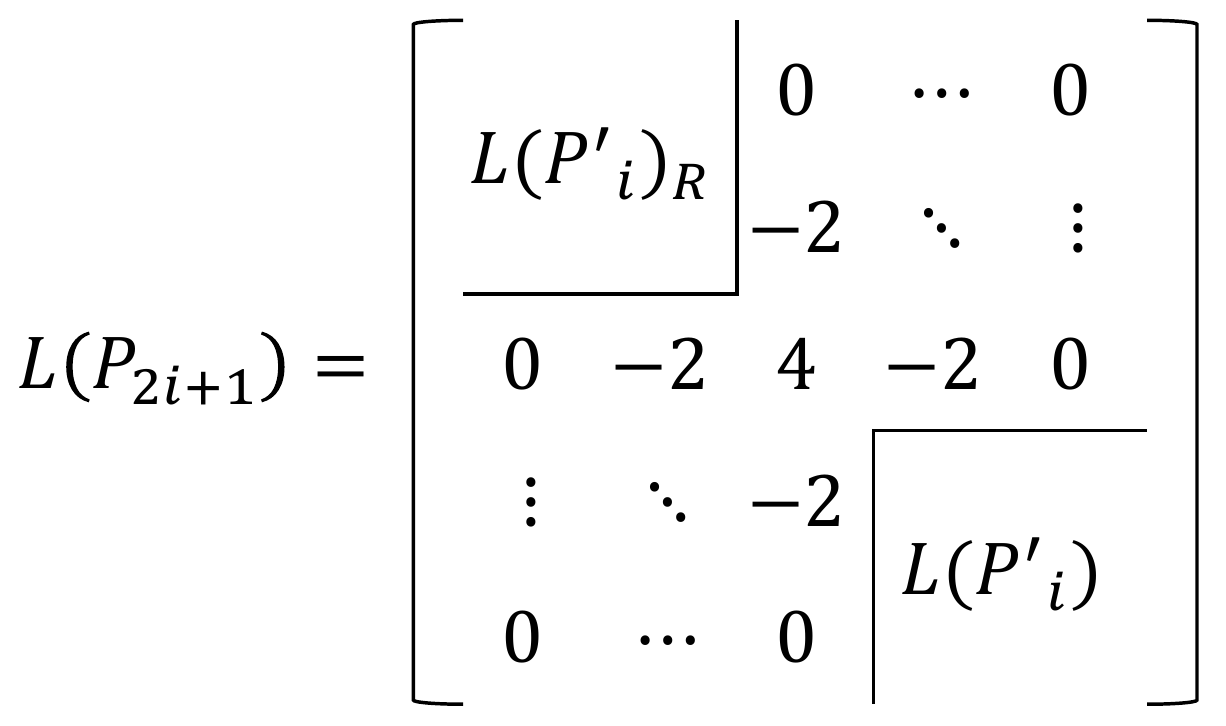}
    \end{center}
    where $L(P'_i)_R$ is just $L(P'_i)_R$ with reversed index order such that the first vertex has extra weight. Suppose we found an eigenvector $\mathbf{x}$ of $L(P_{2i+1})$ corresponding to eigenvalue $\lambda$ with the form:
    \begin{equation}
        \label{expandededeig}
    \mathbf{x} = \begin{bmatrix} \mathbf{y}_R \\ 0 \\ -\mathbf{y}\end{bmatrix}
    \end{equation}
    where $\mathbf{y}_R$ is the reversed vector of $\mathbf{y}$. Then, 
    $$L(P_{2i+1}) \mathbf{x} = \begin{bmatrix} L(P'_i)_R \mathbf{y}_R \\ 0 \\ -L(P'_i)_R \mathbf{y} \end{bmatrix} =  \lambda \begin{bmatrix}  \mathbf{y}_R \\ 0 \\ - \mathbf{y} \end{bmatrix}.$$
    Therefore $L(P'_i) y = \lambda y$ so $y$ is an eigenvector of $L(P'_i)$. Thus, if we can find $i$ orthogonal eigenvectors of $L(P_{2i+1})$ of the form Equation \ref{expandededeig}, we can reduce each of those eigenvectors to find all the eigenvectors of $P_i'$.
    
    We show that only the odd eigenvectors of $P_{2i+1}$ fall into the form of Equation \ref{expandededeig}. The eigenvector formulas for $L(P_{l})$ are (for $j=0, ..., l-1$) \cite{perera2012bipartition}:
    $ L(P_l): x_j(h) = 2\cos(\frac{(2h - 1)j\pi}{2l}), h=1,...,l
    $. Let $l=2i+1$, which means that $l$ is odd. By inspection, we note that if $h$ is odd and $l$ is odd, then both $x_j(h)$ and $\tilde{x}_j(h)$ follow the form in Equation \ref{expandededeig}. That means that the eigenvalues of $L(P_i')$ are the odd eigenvalues of $L(P_{2i+1})$. Plugging into the formulas for $\lambda(L(P_{2i+1}))$ finds the above closed for for $\lambda(L(P_i'))$.

    Finally, $L(P_i'')$ is Toeplitz with $4$ on the diagonals and $-2$ on the off-diagonals. The eigenvalues of tridiagonal Toeplitz matrices have a closed form \cite{noschese2013tridiagonal}, thus finishing the proof. 
\end{proof}
\end{proof}

\end{document}